\DeclareMathOperator*{\argmin}{argmin}
\DeclareMathOperator{\rank}{rank}
\DeclareMathOperator{\Span}{span}
\DeclareMathOperator{\dist}{dist}
\newtheorem{thm}{Theorem}
\newtheorem{lemma}[thm]{Lemma}
\newtheorem{prop}[thm]{Proposition}
\newtheorem{problem}{Problem}
\newtheorem{remark}{Remark}
\newtheorem{definition}[thm]{Definition}
\newtheorem{example}[thm]{Example}
\newtheorem{assumption}{Assumption}
\DeclarePairedDelimiter\ceil{\lceil}{\rceil}
\DeclarePairedDelimiter\floor{\lfloor}{\rfloor}
\newcommand{\tot}[1]{{\color{red}{#1}}}
\newcommand{\textrevisionPrev}[2]{} 
\title{\LARGE \bf
Network Consensus with Privacy: A Secret Sharing Method
}
\author{Silun Zhang,~\IEEEmembership{Member,~IEEE,}
        Thomas Ohlson Timoudas, 
        and~Munther Dahleh,~\IEEEmembership{Fellow,~IEEE}
\thanks{*This work was supported by the Knut and Alice Wallenberg Foundation, KAW 2018.0412. }
\thanks{$^{1}$ Department of Electrical Engineering and Computer
Science and Laboratory for Information and Decision Systems (LIDS), Massachusetts Institute of Technology, Cambridge, MA 02139 USA. ({\tt\small silunz@mit.edu})}
\thanks{$^{2}$ Division of network and systems engineering, KTH Royal Institute of Technology, Sweden.
({\tt\small ttohlson@kth.se})}
\thanks{$^{3}$ Department of Electrical Engineering and Computer
Science and Institute for Data, Systems, and Society, Massachusetts Institute of Technology, Cambridge, MA 02139 USA.
({\tt\small dahleh@mit.edu})}%
}
\begin{document}

\maketitle
\thispagestyle{empty}
\pagestyle{empty}

\begin{abstract}
In this work, inspired by secret sharing schemes, we introduce a privacy-preserving approach for network consensus, by which all nodes in a network can reach  an agreement on their states without exposing the individual state to neighbors.
With the \textit{privacy degree} defined for the agents, the proposed method makes the network resistant to the collusion of any given number of neighbors, and protects the consensus procedure from communication eavesdropping.
Unlike existing works, the proposed privacy-preserving algorithm is resilient to node failures. When a node fails, the  method offers the possibility of rebuilding the lost node via the information kept in its neighbors, even though none of the neighbors knows the exact state of the failing node. 
Moreover, it is shown that the proposed method can achieve consensus and average consensus almost surely, when the agents have arbitrary privacy degrees and a common privacy degree, respectively. To illustrate the theory, two numerical examples are presented. 






\end{abstract}

\begin{IEEEkeywords}
Network consensus, privacy-preserving consensus, cybersecurity, network control, secret sharing scheme.
\end{IEEEkeywords}

\section{Introduction}

Several key information and communication technologies are now converging. Internet of things (IoT), at the confluence of these, is already changing the ways we gather data, process this data, and ultimately derive value from it. 
The unprecedented scale of internet of things is forcing a shift towards increased decentralization, autonomy, and distributed computing \cite{ashton2009internet,chen2014iotvision,shi2016edgevision,wu2014CIoT}.


\textit{Privacy} and \textit{data ownership} - two hotly debated topics, and subject to several regulations in the recent years, e.g., GDPR - are major concerns in many internet of things applications \cite{yang2017iotprivacy,zanella2014iotsmartcities}. When sensitive data is exchanged between parties, there are no guarantees that it stays private, or that it isn't used for nefarious purposes. For instance, smart meter data could be used to infer the presence, absence, and even specific activities, of the  occupants in a house \cite{hart1992nonintrusive,shi2016edgevision}. Another example is location and navigation services, which have become indispensable for today's driving in increasingly intricate transport networks.
These location-aware services allow their providers to track the customers' movements, traveling intentions, and also living habits from the real-time location information shared with them \cite{beresford2003location}.
In recent years, many new methods and algorithms have been developed to limit the need to share data \cite{mcmahan2017federatedlearning}, or to preserve the sensitive components when data is shared in a network \cite{ambrosin2017odin,cortes2016differential,liu2019dynamical}. 

\textit{Reliability} is another important challenge in many IoT applications. The IoT networks are often vulnerable to node and/or communication failures due to various reasons, such as battery exhaustion, extreme environmental conditions, unprecedented radio interference, or even malicious attacks \cite{de2007survey}. 
Sometimes, intermittent failures are even to be expected, especially in applications involving low-energy devices in multi-hop networks, or employing network protocols developed for scalability and performance, rather than reliability \cite{korkmaz2010characterizing,sisinni2018industrial}. 
In particular, recovery and data reconstruction are big challenges with such fragile network architectures \cite{abbasi2012recovering,ang2017bigsensordata,di2016recovery,majdandzic2014spontaneous,shahriar2017generalized}. Protocols need to be aware of such situations, and be able to deal with them \cite{chiang2016iotoverview,stankovic2014iotdirections,weiner2014design}.

The aforementioned changes towards a decentralized network and computation architecture, have brought with them renewed interest in consensus problems, which often arise in decentralized decision making and computation. While consensus problems have a long history, dating back to at least the 1960's and 1970's (see, e.g., \cite{degroot1974reaching}), they have perhaps never been as relevant as they are today. 
The basic problem - namely, to device a method that allows a group of individuals or agents to reach agreement about some parameters through a decentralized communication protocol - naturally appears broadly in numerous IoT contexts, such as sensor fusion \cite{li2015fusionsurvey,olfati2005fusion,choi2012consensusWSN,yu2009consensusfiltering}, 
load balancing \cite{amelina2015loadbalancing,wang2011loadbalancing}, 
clock synchronization \cite{he2014timesynch,schenato2011clocksynch}, peak power load shedding \cite{xu2011loadshedding} and resource management \cite{zhao2018resourcemicrogrid} in smart grids,  distributed optimization \cite{nedic2009distributed,qu2017harnessing}, 
swarm coordination \cite{song2017intrinsic,zhang2018intrinsic,zhang2020intrinsic,zhang2020modeling},
distributed and federated learning \cite{li2020federatedlearning}, and large scale peer-to-peer networks \cite{jelasity04p2paggregate,kempe03aggregate}.

Several methods have been developed to achieve, at least to some extent, privacy-preserving consensus algorithms. Methods include masking the true state by
adding deterministic offsets to the messages \cite{altafini2019dynamical,gupta2017privacy,manitara2013privacy,rezazadeh2018privacy,wang2019statedecomposition,zhang2020consensus}, 
adding random noise to the messages transmitted amongst nodes \cite{he2018privacy,kefayati2007secure,huang2012differentially,mo2016privacy,manitara2013privacy},
and using various encryption schemes \cite{alexandru2019encrypted,ambrosin2017odin,ruan2019secure}. 
Another interesting method for computing separable functions without disclosing nodes' privacy appeared in \cite{ayaso2010information}, where agents exchange a set of samples drawn from a distribution depending on their true state, and the number of these samples can be tuned by a trade-off between the accuracy and privacy level of the algorithm.
While all of these methods have their own merits and drawbacks, none of them address the issue of data recovery. In this context, we still wish to mention \cite{bonawitz2017privacyaggregation}, which considers the case in which a central server collects data from all participating nodes, and computes their average, in a privacy-preserving way. They introduce a mechanism for recovering data from dropped nodes, but due to the presence of a central coordinator, both the task of computing the average, and the task of handling dropped nodes, are completely different from the decentralized situation.

In this paper, we present a network consensus algorithm which preserves individuals' privacy, using Shamir's secret sharing scheme \cite{beimel2011secret,shamir1979share}. 
In our proposed method, rather than sending its true local state, each agent only sends a secret share, generated from its true state, to its neighbors.
Moreover, these secret shares have the property that any less than a threshold number of them cannot reveal any information about the true state, but a set of more shares can  reconstruct the true state exactly.
This feature permits the proposed algorithm to preserve the agent's privacy  even with a collusion involving a certain number of  neighbors, and also protects the network from eavesdropping by an adversary.
In turn, due to the reconstructability of the true state, the overall information injected into the network is still sufficient to  reach network consensus but in a confidential way.  

Furthermore, we introduce the \textit{privacy degree} for each agent, which is the threshold number of colluding neighbors that can be allowed without privacy breach. We show that when all the agents share a common privacy degree, the network can reach an average consensus (see Theorem~\ref{thm:common_degree}), i.e., all agents' states eventually reach an agreement on the average of their initial states. 
On the contrary, in the scenario where different privacy degrees are employed for the agents, in order to guarantee the security and reconstructability of the true states, only consensus instead of average consensus can be obtained (see Theorem~\ref{thm:dif_degree}).

Compared to existing works, a striking merit of the proposed approach is that it is resilient to node failures, such as connection loss, memory loss and even permanent damage. If at any moment a node fails, the proposed method offers the possibility of rebuilding the lost node via the information kept in its neighbors, even though none of the neighbors knows the exact state of the failing node.
Moreover, unlike methods based on random perturbations or \textit{differential privacy} techniques, such as \cite{nozari2017differentially,huang2012differentially,kefayati2007secure}, the proposed method can reach average consensus with no errors when agents have a common privacy degree. 
In addition, the privacy security adopted in this paper  renders the network immune to the collusion of any given number of neighbors. 
In contrast to the existing deterministic methods, e.g., \cite{gupta2017privacy,manitara2013privacy,rezazadeh2018privacy},  where in order to reach consensus, the sum (integral) of all the added disturbance over time needs to be zero, and thus at least one neighbor of each agent must be honest.



%
%

The rest of this paper is organized as follows. Section~\ref{sec:pre} revisits some preliminary knowledge on  stochastic discrete-time systems and on secret sharing schemes. Section~\ref{sec:prob_formulation} gives the detailed definition of the privacy security, and the privacy-preserving consensus problem  that will be investigated in the work. 
Section~\ref{sec:alg_n_results} presents the  algorithms and the main theorems that this paper proposes.
In Section~\ref{sec:converge}, we analyze the convergence of the proposed algorithms and prove the main theorems  in the scenario of a common privacy degree and different privacy degrees, respectively. 
Then, Section~\ref{sec:simu} presents two numerical examples to illustrate the implementation of the proposed algorithms. Finally, we conclude in Section~\ref{sec:conclusion}.

\textbf{Notations:} By the symbols $\mathbb Z$, $\mathbb N$, and $\mathbb Z^+$, we denote the sets of all integers, nonnegative intergers, and positive integers, respectively. Given any positive integer $M \in \mathbb Z^+$, set $[M]=\{1, 2, \dots, M\}$. In addition, we denote by $|S|$ the cardinality of a given set $S$. 
For any event $\kappa$, define the indicator function $\mathbbm{1}(\cdot)$ satisfying that $\mathbbm{1}(\kappa) =1 $ when $\kappa$ happens, otherwise $\mathbbm{1}(\kappa) =0 $. Moreover, we denote by $\mathbf{1}_n \in \mathbb R^n$ and $\mathbf{1}_{m\times n} \in \mathbb R^{m\times n}$  the vector and matrix composed of all one entries, respectively.
The symbol $I_n$ is reserved for the identity matrix  of dimension $n$. 
For any square matrix $A$, $\sigma(A)$ is the set consisting of all eigenvalues of $A$.


\section{Preliminary}\label{sec:pre}
In this paper, we model the inter-agent connectivity in a networked system by a graph $\mathcal{G}=(\mathcal{V},\mathcal{E})$, where the set of nodes is $\mathcal{V}=\big\{1,\ldots,N\big\}$, and set $\mathcal{E}\subset \mathcal{V} \times \mathcal{V}$ consisting of all edges in the graph. 
A graph $\mathcal G$ is called undirected if $(i,j)\in \mathcal E$, for any $(j,i) \in \mathcal E$. 
In the rest of this paper, without further indication, we assume all  graphs are undirected.
With the connectivity graph $\mathcal G$, agents $i$ and $j$ can send information to each other if and only if $(i,j) \in \mathcal E$.
Then the neighbor set of a node $i$ is defined as $\mathcal{N}_i= \big\{ j:(i,j) \in \mathcal{E}\big\}$, and a node $j$ is called a neighbor of $i$, if $j \in \mathcal{N}_i$. 
Moreover, we say an edge is \textit{incident} to a vertex $i\in \mathcal V$ if $i$ is one of the endpoints of the edge, and two edges are \textit{adjacent}  if they are incident to a common node. 

\textrevisionPrev{
The communication between $N$ agents will be modelled by an undirected\footnote{Recall that a graph $\mathcal G$ is called undirected if $(i,j)\in \mathcal E$, whenever $(j,i) \in \mathcal E$.} graph $\mathcal{G}=(\mathcal{V},\mathcal{E})$, with the set of nodes $\mathcal{V}=\big\{1,\ldots,N\big\}$ corresponding to the set of agents, and the edges $\mathcal{E}\subset \mathcal{V} \times \mathcal{V}$ corresponding to the communication links between them. \tot{**DO WE EVER USE DIRECTED GRAPHS? IF NOT, REMOVE THIS** In the rest of this paper, without further indication, we assume all the graphs are undirected.}
In this model, an agent $i$ can therefore send a message to agent $j$ if and only if there is an edge between them (i.e., $(i, j) \in \mathcal{E}$), and it is furthermore implied that any message sent over a given communication link (edge) is only received by the other endpoint (unless another agent is eavesdropping).

The neighbor set of a node $i$ is defined as $\mathcal{N}_i= \big\{ j:(i,j) \in \mathcal{E}\big\}$, and a node $j$ is called a neighbor of $i$, if $j \in \mathcal{N}_i$. Moreover, an edge is said to be \textit{incident} to a node $i\in \mathcal V$ if $i$ is one of the endpoints of the edge, and two edges are called \textit{adjacent} if they are incident to a common node. 
}





%
%
%
%

\subsection{Invariance principle for stochastic systems}
We first revisit some results of the invariance principle for discrete-time stochastic systems.

Let $\mathcal S$ be a subspace of $\mathbb R^n$. Given a complete probability space $(\Omega, \mathcal F, \mathbb P)$, consider the stochastic system
\begin{align}\label{eq:stoch_system}
x(t+1)= F(x(t), w_t),
\end{align}
with $x(0)\in \mathcal S$, $F:\mathcal S \times \mathcal W \to \mathcal S$ a Borel measurable function, and 
$(w_t)_{t\in \mathbb N}: \Omega \to (\mathcal{W})^{\mathbb{N}}$ a random variable sequence defined on the probability space $(\Omega, \mathcal F, \mathbb P)$. Assume that for any $\omega \in \Omega$, and indices $i\neq j$, the marginal variables $w_i$ and $w_j$ are independent.


We choose a Lyapunov function as a sequence of measurable functions $\{V_t: \mathcal S \to \mathbb R^+\}_{t\geq 0}$ such that $V_t(x) \geq 0$ for any $t\geq 0$, $x\in \mathcal S$. 
We say that the Lyapunov function $\{V_t\}_{t\geq 0}$ is radially unbounded if 
\[
\liminf_{x\in\mathcal S, \|x\|\to \infty} V_t(x) = \infty,
\]
for all $t$.
Define the distance of $x\in \mathbb R^n$ to a set $\mathcal M\subset \mathbb R^n$ by
\[
\dist(x,\mathcal M)=\inf_{p\in \mathcal M} \|x-p\|.
\]
Then we have the following invariance theorem showing that with certain conditions of $\{V_t\}$, every solution of the stochastic system  \eqref{eq:stoch_system} approaches an invariant set as $t\to \infty$ almost surely.
\begin{lemma}[Developed from Prop. 3.1 in \cite{zhang2016lasalle}]\label{lem:invariance_sto_system}
If there exists a radially unbounded Lyapunov function $\{V_t\}_{t\geq 0}: \mathcal S \to \mathbb R^+ $, and a real number $c\in (0,1)$ such that
\begin{equation}
\mathbb E[V_{t+1}(F(x, w_t))] \leq c V_t(x), \quad \forall x\in \mathcal S, t\in \mathbb N,
\end{equation}
then for any initial condition $x(0)\in \mathcal S$, the solution $\{x(t)\}_{t\geq 0}$ of  dynamics \eqref{eq:stoch_system}  fulfills 
\[\lim_{t\to \infty} \dist(x(t), \mathcal M) =0, \quad a.s., \]
where $\mathcal M$ satisfies 
\[
\mathcal M=\bigcup_{i=0}^\infty \bigcap_{j\geq i}^\infty \big\{x\in \mathcal S: V_j(x)=0 \big\}.
\]
\end{lemma}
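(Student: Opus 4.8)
The plan is to regard the scalar process $M_t := V_t(x(t))$ as a nonnegative almost-supermartingale, extract $M_t \to 0$ almost surely from the contraction hypothesis, and then upgrade this to convergence of the state to $\mathcal M$ using radial unboundedness of the $V_t$.

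The first step is a conditional contraction along trajectories. Let $\mathcal F_t := \sigma(w_0,\dots,w_{t-1})$ be the natural filtration, with $\mathcal F_0$ trivial. Since $F$ and each $V_{t+1}$ are Borel measurable, $x(t)$ is $\mathcal F_t$-measurable, and $w_t$ is independent of $\mathcal F_t$ (as follows from the independence of the sequence $(w_t)$), we obtain $\mathbb E[V_{t+1}(x(t+1))\mid \mathcal F_t] = h_t(x(t))$ almost surely, where $h_t(y) := \mathbb E[V_{t+1}(F(y,w_t))]$. The hypothesis says exactly $h_t(y)\le c\,V_t(y)$ for every $y\in\mathcal S$, so $\mathbb E[M_{t+1}\mid\mathcal F_t]\le c\,M_t$ a.s.; taking expectations and iterating gives $\mathbb E[M_t]\le c^t V_0(x(0))$ for all $t$. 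Because $M_t\ge 0$, Tonelli's theorem then yields $\mathbb E\big[\sum_{t\ge 0} M_t\big] = \sum_{t\ge 0}\mathbb E[M_t]\le V_0(x(0))/(1-c)<\infty$, hence $\sum_t M_t<\infty$ and in particular $M_t = V_t(x(t))\to 0$ almost surely. (Alternatively, the nonnegative supermartingale convergence theorem gives $M_t\to M_\infty$ a.s., and Fatou together with $\mathbb E[M_t]\to 0$ forces $M_\infty = 0$.)

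It remains, on the full-measure event where $V_t(x(t))\to 0$, to deduce $\dist(x(t),\mathcal M)\to 0$. Here one first uses radial unboundedness of the $V_t$ to confine the trajectory $\{x(t)\}$ to a bounded set, so every subsequence has a convergent sub-subsequence; then, arguing by contradiction, if $\dist(x(t_k),\mathcal M)\ge\varepsilon>0$ along some $t_k\to\infty$ with $x(t_k)\to p$, one must deduce $V_j(p)=0$ for all sufficiently large $j$, i.e. $p\in\mathcal M=\bigcup_i\bigcap_{j\ge i}\{V_j=0\}$, contradicting $\dist(p,\mathcal M)\ge\varepsilon$.

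I expect this last step to be the main obstacle, for two reasons. First, radial unboundedness is stated only pointwise in $t$, so pinning the whole trajectory into a single bounded set genuinely needs a uniform-in-$t$ form of it — e.g. a single lower envelope $V_t(x)\ge\alpha(\dist(x,\mathcal M))$ with $\alpha$ vanishing only at $0$, or comparability of the $V_t$ to a fixed norm-like function — which is the form in which the hypothesis is actually available in our applications. Second, converting $V_{t_k}(x(t_k))\to 0$ into $V_j(p)=0$ for a fixed large $j$ is precisely the diagonal/limit argument that motivates the $\bigcup_i\bigcap_{j\ge i}$ definition of $\mathcal M$. Both ingredients are exactly those underlying Prop.~3.1 of \cite{zhang2016lasalle}, and I would carry that argument over with the adjustments needed for the present discrete-time, time-varying-Lyapunov setting; the remaining parts (the conditional-expectation identity and the summability estimate) are routine.
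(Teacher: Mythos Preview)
Your proposal is correct and follows the same two-step structure as the paper's proof: first establish $V_t(x(t))\to 0$ almost surely, then use boundedness of the trajectory and a subsequential limit-point argument to place every accumulation point in $\mathcal M=\liminf_t\{V_t=0\}$. The only difference is that you spell out the supermartingale/summability argument for the first step explicitly, whereas the paper simply invokes Prop.~3.1 of \cite{zhang2016lasalle}; your careful flagging of the uniformity-in-$t$ issue for boundedness and of the passage from $V_{t_\ell}(x(t_\ell))\to 0$ to $V_t(p)\to 0$ identifies precisely the places the paper's own appendix proof handles tersely.
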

\begin{proof}
See Appendix~\ref{sec:appendix_proof}.
\end{proof}

\subsection{Secret sharing schemes}
The \textit{secret sharing} schemes are 
encryption methods for sharing a confidential message with multiple parties, such that even with the collusion of a certain number of parties, the message should still not be  disclosed. 

In general, an $(n,p)$ secret sharing scheme often consists of two algorithms called by $\big($\textit{Share}, \textit{Reconstruct}$\big)$ with the forms that
\begin{itemize}
\item \textit{Share} takes as input a confidential message $S$ and outputs $n$ secret shares $\{S_1,\dots, S_n\}$;
\item \textit{Reconstruct} takes as input $p$ different secret shares $\{S_i\}_{i\in \mathcal I}$ for any index set $\mathcal I \subset [n]$ with $|\mathcal I|=p$, and outputs $S$.
\end{itemize}
Then each generated share $S_i$  is   distributed to one party $i$ for $i\in[n]$. Moreover, the security of secret sharing requires that any collusion of less than $p$ parties cannot reveal the message $S$.  
Secret sharing schemes have been used in many applications, e.g., encryption keys, distributed storage, missile launch codes, and numbered bank accounts (see \cite{al2011study,beimel2011secret, iftene2006secret, schoenmakers1999simple}, and the references therein). In these applications, each of the generated pieces of information must keep the original message confidential, as their exposure is undesirable, however, it is also critical that the message should be reconstructable.

One celebrated secret sharing scheme is the Shamir's scheme proposed by Adi Shamir \cite{shamir1979share}, in which the secret shares are generated by evaluation of a $(p-1)$-order polynomial at $n$ different points.
In particular, such a scheme satisfies information-theoretic security, i.e., even any $p-1$ secret shares reveal absolutely no information about the secret $S$.  
Inspired by the Shamir's scheme, in this paper, we will propose a privacy-preserving algorithm that employs polynomials to mask the true state of each agent in network consensus.


\section{Problem Formulation}\label{sec:prob_formulation}
We consider the problem of synchronizing  agents' states in a network, while avoid the risk of disclosing the true state of each agent to their neighbors.
Based on \textit{secret sharing} schemes, two algorithms are proposed to address such a problem: 
One is for the case where all the agents are required  to have a same security level, i.e., have a common maximum number of neighbors allowed to  collude to  pry into the agent's privacy without a security breach.
With the proposed algorithm in this scenario, average consensus  is achieved  eventually for all the agents.
Another algorithm is applicable to the case where the agents  can have different security levels. The price to pay is that the achieved consensus is not necessary being the average of the initial state any more.
In addition, it shows that both proposed algorithms  are also able to guarantee  communication safety, and be robust to a node failure in network consensus.

%

\subsection{Privacy degree}
This subsection is devoted to provide a detailed definition for the privacy security  employed in this paper.

We recall that in  network systems a distributed algorithm in general has the form
\begin{equation}\label{eq:general_algo}
  x_i(t+1) =  \mathcal F_i \Big(\,x_i(t), \, \big\{\mathcal R_t^{ij} \big(x_j(t)\big) \big\}_{j\in \mathcal N_i}, t \Big),
\end{equation}
for $i\in \mathcal V$, where $x_i(t) \in \mathbb R^d$ is the agent $i$'s state,
$\mathcal F_i(\cdot,\cdot, \cdot)$ is an iterative update law,  $\mathcal G=(\mathcal V, \mathcal E)$ is the inter-agent topology, 
 $\mathcal R_t^{ij}(x_j(t))$ is the information that agent $j$ sends to agent $i$ through the communication link $(i,j)\in \mathcal E$,
and
mapping $\mathcal R_t^{ij}: \mathbb R^d \to \mathbb R^m$ is agent $j$'s encoding function on edge $(i,j)$. 
Due to the fact that the update law $\mathcal F_i$ depends on each agent's own state and the information received from its neighbors, the algorithm can run in a distributed manner.

Next, for a  distributed algorithm given in \eqref{eq:general_algo}, we define its \textit{privacy degree}  to indicate each agent's capacity of ensuring privacy in the algorithm.

\begin{definition}[\textbf{Privacy degree}]\label{def:privacy_degree}
Given integer vector $p=(p_1, p_2, \dots, p_N) \in \mathbb N^N$.
We say that a distributed algorithm \eqref{eq:general_algo} has a privacy degree $p$, if for each $i\in \mathcal V$ and $t\in \mathbb N$, it holds that  the messages $\{\mathcal R_t^{\ell i}(x_i(t))\}_{\ell \in \mathcal I}$ do not disclose the privacy $x_i(t)$, for any index set $\mathcal I \subset \mathcal N_i$ with $| \mathcal I| \leq p_i$.
\end{definition}


The above definition entitles a distributed algorithm  with the ability that even a certain number of an agent $i$'s neighbors collude or the communication messages sent by it are disclosed, the agent $i$'s privacy $x_i(t)$ can still stay confidential. The following remark provides more insights about this definition.

\begin{remark}
For an algorithm of privacy degree $p$,
(i) if $p_i>0$, the state $x_i(t)$ stays confidential to each neighbor $j\in \mathcal{N}_i$. 
(ii) If $p_i=|\mathcal N_i|-1$, then $x_i(t)$ is disclosed only when all neighbors of agent $i$ collude. (iii) If $p_i \geq |\mathcal N_i|$, the state $x_i(t)$ is totally confidential in the network. 
\end{remark}
\noindent Furthermore, we note that in an algorithm with positive private degree, the communication links are also protected from eavesdropping, i.e., for any agent $i$ with a privacy degree $p_i >0$, an adversary wiretapping on any $p_i$ or less communication links in edges $\{(i,j): j\in \mathcal N_i\}$ learns nothing about $i$'s privacy $x_i(t)$.


In the existing methods, where the private state is masked by adding  deterministic offsets that vanish in average, e.g., \cite{gupta2017privacy}, or the accumulation of the added offsets goes to $0$ as $t\to \infty$, e.g., \cite{manitara2013privacy,rezazadeh2018privacy}, the privacy degree $p_i=|\mathcal{N}_i|-1$, $\forall i\in\mathcal V$, i.e., the privacy is preserved only if at least one neighbor is not malicious.

It is not hard to see that Definition~\ref{def:privacy_degree} only indicates an upper bound for the  number of neighbors allowed to be attacked or collude without a privacy breach. In order to make this bound tight,  we introduce the following definition of \textit{exact privacy degree} for a distributed algorithm.

\begin{definition}[\textbf{Exact privacy degree}]\label{def:strict_privacy_degree}
A distributed algorithm \eqref{eq:general_algo} has an exact privacy degree $p\in \mathbb N^N$, if 
\begin{enumerate}
\item[(i)] it  has a privacy degree $p$, and
\item[(ii)] for each $i\in\mathcal V$, $t\in \mathbb N$, and any index set $\mathcal I \subset \mathcal N_i$, if  $| \mathcal I| \geq p_i+1$,
the state $x_i(t)$ can be reconstructed from the  messages $\{\mathcal R_t^{\ell i}(x_i(t))\}_{\ell \in \mathcal I}$.  
\end{enumerate}
\end{definition}

In Definition~\ref{def:strict_privacy_degree}, the condition (ii)
not only provides the tightness of the degree of privacy, but also renders the network resilient to a node failure,
for example, the state of node $i$ is lost due to a blackout, or the node $i$'s connections are permanently broken.
More specific, at any moment when node $i$ fails, the reconstructability  requirement (ii) preserves the possibility of rebuilding the lost node via the information kept in enough number of its neighbors, even though none of the neighbors knows the exact state of node $i$. As a result, we typically want to set the exact privacy degree $p_i < |\mathcal{N}_i|$ to maintain such resilience to a node failure.


\subsection{Problem of privacy preserving consensus}

Now we are ready to raise the problem of privacy-preserving consensus that will be solved in this paper.

\begin{problem}\label{prob:1} 
In a network consisting of $N$ agents, the \textit{privacy-preserving consensus} with an exact privacy degree $p \in \mathbb N^N$ is to achieve that 
\begin{enumerate}
\item[(i)] the consensus algorithm has an exact privacy degree $p$, and
\item[(ii)]
asymptotic consensus is reached, i.e., for any initial condition $\{x_i(0)\}_{i\in \mathcal V}$, it holds that
\begin{equation}\label{eq:consensus_in_problem}
\lim_{t\to \infty} x_i(t)=x_\infty,\qquad \forall i\in[N],
\end{equation}
where $x_\infty$ is given by some non-trivial function of the initial states 
\footnote{Here we insist that $x_\infty$ can be expressed as a non-trivial function of the initial states, i.e., $x_\infty = G(x_1(0), \dots, x_N(0))$, for some given function $G: \mathbb{R}^{Nd} \to \mathbb{R}^d$. 
Because otherwise, there are trivial solutions requiring no information exchange at all. For example, when function $G(\cdot) = c $ with some pre-given constant $c$, consensus can be solved by directly setting each node's state $x_i(t)=c$.}.
\end{enumerate}  
\end{problem}

We note that in Problem~\ref{prob:1} the privacy-preserving requirement (i)  only matters before the consensus is achieved. 
After that, although the information transmitted amongst agents still keeps confidential, 
the agent state $x_i(t)$ is nevertheless already known to all the agents due to state consensus. 

Moreover,  we say that an algorithm reaches \textit{average consensus} with an exact privacy degree $p$, 
if in Problem~\ref{prob:1}  the 
function $x_\infty = \frac{1}{N} \sum_{i=1}^N x_i(0)$. 
In our previous work \cite{zhang2020consensus}, we address average consensus problem with a privacy degree instead of an exact privacy degree, i.e., there is no guarantee that the privacy $x_i(t)$ can be reconstructed from many enough messages $\{\mathcal R_t^{\ell i}(x_i(t))\}_{\ell \in \mathcal I}$.

\section{Algorithms and main results}\label{sec:alg_n_results}
The idea for solving Problem~\ref{prob:1} is to employ an \textit{encoding function} that splits the agent's local state $x_i(t)$ into $M$ distinct \textit{secret shares}, and sending each of these distinct shares to one of its neighbors $j \in \mathcal{N}_i$. This ensures that no single neighbor has complete knowledge of the privacy $x_i(t)$.

As the algorithm is expected to have an \textit{exact privacy degree} of $p$, 
we must further require that any $p_i$  or less of these shares leak no information about the secret $x_i(t)$, but $x_i(t)$ becomes totally reconstructable given any $(p_i+1)$ number of these shares.
Inspired by Shamir's secret sharing scheme, 
we use polynomials as the encoding functions, which are both simple and computationally inexpensive. 

In the rest of paper, for succinct expression, we will without loss of generality assume that the state dimension $d=1$. We note that in the paper  all the algorithms and results obtained can  be easily extended to higher-dimensional cases, (see Remark~\ref{rem:high_dim}).

\subsection{Encoding functions}

Specifically, the \textit{encoding function} used by agent $i$ is given by a polynomial of degree $p_i$,
\begin{equation}\label{eq:encoding_polynomial}
f_i(\theta,t)=\sum_{\ell=1}^{p_i} a_{i\ell}(t) \theta^\ell + x_i(t),
\end{equation}
for $\theta\in \mathbb R$ and $t\in \mathbb N$, where $a_{i\ell}(t) \in \mathbb R$ is some coefficient held by agent $i$, for $\ell \in[p_i]$, and $x_i(t)$ is agent $i$'s state. 
Notice that the state  $x_i(t)$ of agent $i$ is given by $f_i(0, t)$.
Since the coefficients $\{a_{i\ell}(t)\}_{\ell\in[p_i]}$ and state $x_i(t)$ are only known by agent $i$, the encoding function $f_i(\cdot, \cdot)$ is also kept secret from the other agents.

Given a set of communication keys $\{s_1, \dots, s_M\} \subset \mathbb Z^+$, each agent $i\in\mathcal V$ can use its encoding polynomial \eqref{eq:encoding_polynomial} to generate $M$ encoded states  $\{r_i^k(t)\}_{k\in [M]}$ by setting
\begin{equation}\label{eq:rik_encoding_function}
r_i^k(t) = f_i(s_k, t), \qquad \forall k\in[M],
\end{equation}
where $k$ is called a channel index, and $r_i^k(t)$ the encoded state on channel $k$.
Since the degree of the polynomial $f_i(\cdot,t)$ is $p_i$, it is possible to reconstruct the local state $x_i(t)$ from any combination of $p_i+1$ distinct encoded states $r_i^k(t)$, provided that $p_i < M$.


From now on, let  $r_i(t) = (r_i^{1}(t), \dots, r_i^{M}(t)) \in \mathbb R^M$, $i\in \mathcal V$, and $r^{k}(t) = (r_1^{k}(t), \dots, r_N^{k}(t))\in \mathbb R^{N}$, for any $k\in[M]$. Furthermore, let
\begin{equation}\label{eq:dmax}
d_{max} = \max_{i\in\mathcal V} |\mathcal N_i|,
\end{equation} 
denote the maximal degree in the graph $\mathcal G$.


\subsection{Channel selection}\label{sec:channel_selection}
At each time $t\in \mathbb N$, all the agents will first launch a handshake procedure, through which each edge $(i,j)\in \mathcal E$ is assigned with a channel $c_{ij}(t)\in [M]$.
Here $c_{ij}(t)$ is a channel indicator. This handshake procedure is presented in Algorithm~\ref{alg:handshake1}, and can be performed in an asynchronous, decentralized manner. 

The next lemma shows that the handshake Algorithm~\ref{alg:handshake1} fulfills that all adjacent edges can be assigned with mutually different channels. 

\begin{lemma}\label{lem:property_Handshake}
If $M \geq 2 d_{max} -1 $, at any time $t\in \mathbb N$,  Algorithm~\ref{alg:handshake1} achieves that
\begin{enumerate}
\item[(i)] assign every edge with a channel, i.e.,  $c_{ij}(t)\in [M]$, $\forall (i,j) \in \mathcal E$, 
\item[(ii)] adjacent edges occupy different channels, i.e.,  the assigned channel $c_{ij}(t)$ satisfies
\begin{equation}\label{eq:cij}
 c_{ij}(t) \notin \bigg(\bigcup_{k\in \mathcal N_i \setminus \{j\}} \{c_{ik}(t)\} \bigg) \bigcup
 \bigg(\bigcup_{k\in \mathcal N_j \setminus \{i\}} \{c_{jk}(t)\}
 \bigg),
\end{equation}
for any $(i,j) \in \mathcal E$.
\end{enumerate}
\end{lemma}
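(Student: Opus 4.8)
The plan is to prove both claims by tracking the state of the channel indicators through the nested loops of Algorithm~\ref{alg:handshake1}, using a simple counting argument to show that the random choice in the \texttt{else} branch is always well-defined (i.e., the set $[M]\setminus(R_i\cup R_j)$ is nonempty), and then verifying the adjacency constraint \eqref{eq:cij} as an invariant maintained at each assignment.

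First I would establish claim (i). The outer loop ranges over all nodes $i\in\mathcal V$ and the inner loop over all neighbors $j\in\mathcal N_i$; so every edge $(i,j)\in\mathcal E$ is considered at least once (in fact twice, once as $(i,j)$ and once as $(j,i)$). It therefore suffices to show that whenever an edge is first handled and falls into the \texttt{else} branch, the set $[M]\setminus(R_i\cup R_j)$ from which $c_{ij}(t)$ is drawn is nonempty; once assigned a value in $[M]$, the indicator is never reset to $0$, and on the second visit line~\ref{line:symmetric} simply copies it. For the nonemptiness: at the moment of assignment, $R_i$ consists of the already-assigned channels on edges incident to $i$, so $|R_i|\le |\mathcal N_i|-1 \le d_{max}-1$ (the edge $(i,j)$ itself is not yet assigned), and likewise $|R_j|\le |\mathcal N_j|-1\le d_{max}-1$. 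Hence $|R_i\cup R_j|\le 2d_{max}-2 < M$, so $[M]\setminus(R_i\cup R_j)\neq\varnothing$ and the uniform draw is legitimate. By induction over the sequence of edge-assignments, every edge ends up with a channel in $[M]$.

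Next I would establish claim (ii), the adjacency-separation property \eqref{eq:cij}, as a loop invariant: after processing any edge, for every pair of adjacent assigned edges the channels differ. When $c_{ij}(t)$ is assigned in the \texttt{else} branch, it is chosen outside $R_i\cup R_j$, i.e., outside the set of all channels currently assigned to edges incident to $i$ or to $j$ — which is exactly the set appearing on the right-hand side of \eqref{eq:cij}, restricted to the edges assigned so far. It remains to argue that later assignments cannot violate separation with this edge: any edge adjacent to $(i,j)$ is incident to $i$ or to $j$, hence when it is later assigned in its own \texttt{else} branch its channel is drawn outside the current $R_i$ (resp. $R_j$), which by then contains $c_{ij}(t)$; and the line~\ref{line:symmetric} copy only propagates an already-consistent value. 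Thus the separation property is preserved through to termination, giving \eqref{eq:cij}.

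The main obstacle is purely bookkeeping: being careful that at the moment $c_{ij}(t)$ is set, the sets $R_i$ and $R_j$ indeed capture \emph{all} previously assigned incident channels (so that no later step can create a conflict that was not already forbidden), and that the double-visit of each edge — once from each endpoint — is handled consistently by the $c_{ji}(t)\neq 0$ test. There is no hard analytic content; the bound $M\ge 2d_{max}-1$ is exactly what the counting $|R_i\cup R_j|\le 2(d_{max}-1)$ demands, and the argument is a finite induction over the (at most $|\mathcal E|$) edge-assignment events.
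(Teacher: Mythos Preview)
Your proposal is correct and follows essentially the same approach as the paper: both arguments hinge on the counting bound $|R_i\cup R_j|\le 2(d_{max}-1)<M$ to establish (i), and on the fact that $c_{ij}(t)$ is drawn from $[M]\setminus(R_i\cup R_j)$ to establish (ii). You give more detail on the invariant-maintenance side (later assignments cannot clash with earlier ones), whereas the paper dispatches (ii) in a single sentence; but the underlying idea is identical.
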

\begin{proof}
To prove (i), we only need show that in Algorithm~\ref{alg:handshake1} the set $K=[M] \setminus (R_i \cup R_j)$ is nonempty at any time, where $R_i$ is all the channels that have been assigned to the edges incident  to  $i$. 
When we assign a channel to an unassigned edge $(i,j) \in \mathcal E$, there are at most $(d_{max}-1)$ edges that have already been assigned in the edges incident to $i$,  and at most $(d_{max}-1)$ assigned edges in edges incident to $j$.
 Because $M > 2 d_{max} -2 $, set $K$ is nonempty. The argument (ii) 
 follows the fact that each channel $c_{ij}(t)$ is chosen from the set $\Omega=[M]\setminus (R_i \cup R_j)$. 
 \end{proof}
 
\renewcommand{\algorithmicrequire}{\textbf{Input:}}
\renewcommand{\algorithmicensure}{\textbf{Output:}}
\algsetup{indent=12pt}

\begin{algorithm}[t]
\caption{Handshake: channel selection at time $t$.}
\label{alg:handshake1}
\begin{algorithmic}[1]
\REQUIRE Graph $\mathcal G=(\mathcal E, \mathcal V)$; Number of channels $M$ satisfying $M \geq 2 d_{max}-1$.

\STATE \textbf{Initialization:} For each edge $(i,j)\in \mathcal E$, initialize  channel indicator $c_{ij}(t)=0$.\footnotemark
\,
Set $\mathcal A_i =  \mathcal N_i$ be the set of $i$'s neighbors connected with an unassigned edge.

\STATE \textbf{Each agent $i$ preforms the following process in parallel:} 
\begin{enumerate}[nolistsep]
   \item[\textbf{Step 1}] (Edge selection):\\  Find the neighbor $j =  \min_{\ell} \{\ell\in \mathcal{A}_i\}$.\footnotemark \\
Ping node $j$, and wait until $j$ pings back.\footnotemark
   
   \item[\textbf{Step 2}] (Connection establishment):\\ 
   Once $j$ pings back, agents $i,j$ together choose an available channel $k$ uniformly at random  from the set $\big( [M] \setminus (R_i \cup R_j)\big)$ for the edge $(i,j)$, where
    \begin{align*}
	    R_i &= \big(\bigcup_{k\in \mathcal N_i} \{c_{ik}(t)\}\big) \setminus \{0\}, \\
	    R_j &= \big(\bigcup_{k\in \mathcal N_j} \{c_{jk}(t)\}\big) \setminus \{0\}.
	    \end{align*}
	 Set $c_{ij}(t)=c_{ji}(t)=k$.
	 Remove $j$ from $\mathcal A_i$, and remove $i$ from $\mathcal A_j$.
	 
	 \item[\textbf{Step 3}] (Repeat):\\ If $\mathcal A_i$ is non-empty, go to \textbf{Step 1}. If it is empty, the  agent $i$'s handshake is done.
\end{enumerate}
\STATE \textbf{END}: Handshake is done when all agents finish their own handshake processes.

\end{algorithmic}
\end{algorithm}
\addtocounter{footnote}{-2}
\footnotetext{The initialization $c_{ij}(t)=0$ indicates that an edge $(i,j)$ has not been assigned with a channel yet.}
\addtocounter{footnote}{+1}
\footnotetext{Note that set $\mathcal A_i$ is not empty whenever there exist edges in $\mathcal{N}_i$ not being assigned.}
\addtocounter{footnote}{+1}
\footnotetext{Ping is an operation that a node sends a specific message  to its neighbor to notify that it is ready to establish a connection.}
 
\begin{remark}
In Step 1 of Algorithm~\ref{alg:handshake1}, we assume all the agents are ordered according to their index $i$. Alternatively, an order of agents can be specified by any other nodes' identification, e.g., the MAC address on the internet.
\end{remark}

As to the complexity of Algorithm~\ref{alg:handshake1}, suppose we consider Step 2  as the elementary operation, then the time complexity of the whole algorithm is $\mathcal{O}(N^2)$. This is because the number of edges $| \mathcal E| = \frac{1}{2}\sum_{i = 1}^N | \mathcal N_i| \leq N^2$. Therefore, it is guaranteed that the handshake procedure of all the agents can be finished in a bounded time.

\subsection{Synchronization update law}

We are now ready to give the update law for privacy-preserving consensus. The basic idea is to, for each separate channel $k\in[M]$, first perform an average consensus step for the encoded states belonging to that channel, through the edges  with channels assigned in Section~\ref{sec:channel_selection}, and then conduct a projection operator to update local states $x_i$ and coefficients $a_{i}$.

At each time $t$, once the handshake process in Algorithm~\ref{alg:handshake1} for an agent $i\in \mathcal V$ is finished, i.e., $c_{ij}(t) \neq 0$ for all $j\in \mathcal N_i$, the state of agent $i$ is then updated according to the update law
\begin{subequations} \label{eq:update_law_total}
\begin{align}
\widetilde{r}^{k}_i (t+1) &= r^{k}_i(t) \!+\! \gamma\!\! \sum_{j\in \mathcal N_i}  l_{ij}^{k}(t)\Big(r_j^{k}(t) - r_i^{k}(t)\Big), \; \forall k\in [M], \label{eq:update_law_a}\\
\begin{bmatrix}
x_i(t+1)\\
a_i(t+1)
\end{bmatrix}
&= T_i \;\widetilde{r}_i (t+1),\label{eq:update_law_b}\\
r_i^k(t+1) &= f_i(s_k, t+1), \; \forall k\in [M], \label{eq:update_law_c}
\end{align}
\end{subequations}
%
with the initial condition $r_i^k(0)= f_i(s_k, 0)$ for every   $k\in [M]$, and the step size $\gamma>0$. Here, the communication weight $l_{ij}^k(t)= \mathbbm{1}(c_{ij}(t)=k)$, 
vector $\widetilde{r}_i(t) = (\widetilde{r}_i^{1}(t), \dots, \widetilde{r}_i^{M}(t)) \in \mathbb R^M$,
$a_i(t) = (a_{i1}(t), \dots, a_{ip_i}(t)) \in \mathbb R^{p_i}$
and $f_i(\cdot, \cdot)$ is the encoding polynomial  defined in \eqref{eq:encoding_polynomial}. 
Furthermore, the projection matrix $T_i=(\Phi_i^T \Phi_i)^{-1}\Phi_i^T \in \mathbb R^{(p_i+1) \times M}$ with $\Phi_i$ being the Vandermonde matrix related to the key sequence $S = (s_1, \dots, s_M) \in \mathbb R^M$, i.e.,
\begin{equation}\label{eq:Vandermonde}
\Phi_i=
\begin{bmatrix}
1 & s_1 & (s_1)^2 & \dots & (s_1)^{p_i}\\
1 & s_2 & (s_2)^2 & \dots & (s_2)^{p_i}\\
\vdots & \vdots &\vdots & \ddots & \vdots\\
1 & s_M & (s_M)^2 & \dots & (s_M)^{p_i}
\end{bmatrix} \in \mathbb R^{M \times (p_i+1)}.
\end{equation}
Note that if  $s_i \neq s_j, \forall i\neq j$ and $M \geq p_i+1$, then $\rank(\Phi_i) = p_i+1$.

The following remark indicates that  the individual privacy is not disclosed  in the proposed algorithm \eqref{eq:update_law_total}.
\begin{remark}
Due to the definition of $l_{ij}^k(t)$, in the update law \eqref{eq:update_law_total}, each agent $i$ only accesses to  one of neighbor $j$'s encoded states, i.e., $r_j^{c_{ij}(t)}(t)$.
\end{remark}

More specifically, the update  \eqref{eq:update_law_a} can be seen as a  channel-wise consensus  process, by which  encoded state $r_i^k(t)$ is averaged among the $i$'s neighbors being assigned with 
channel $k$ at each time $t$.  
During this consensus procedure, 
in order to maintain the updated encoded state $r_i(t+1)$ being able to generate from a polynomial of order $p_i$, 
\eqref{eq:update_law_b} is used to project the intermediate consensus variable $\widetilde{r}_i (t+1)$ back to the subspace $\Span (\Phi_i)$. 
Note that $\Phi_i$ is full column rank, and
the update \eqref{eq:update_law_b} indeed solves the optimization problem given by
\[
\begin{bmatrix}
x_i(t+1)\\
a_i(t+1)
\end{bmatrix}
= \argmin_{v\in \mathbb R^{p_i+1}} \big\|\widetilde{r}_i (t+1) - \Phi_i v \big\|_2,
\]
where $\|\cdot\|_2$ is the Euclidean norm.

Next, we have the following lemma to show that 
Algorithm~\eqref{eq:update_law_total} has an exact privacy degree $p$, and more essentially provides  a method to reconstructing the agent $i$'s privacy from a certain number of its encoded states.

\begin{lemma}\label{lem:recon_unrecon_poly}
The distributed update law \eqref{eq:update_law_total} has an exact privacy degree $p\in \mathbb N^N$. Moreover, for any $i\in \mathcal V$ and an arbitrary  index set $\mathcal I \subset [M]$ satisfying $|\mathcal I|\geq p_i+1 $, the privacy $x_i(t)$ can be reconstructed by
\begin{equation}\label{eq:recon_poly_larger_than_pi}
x_i(t)=\sum_{k\in \mathcal I} r_i^k (t) \frac{\prod_{\ell\in \mathcal I \setminus \{k\}}  (-s_\ell)}{ \prod_{\ell\in \mathcal I \setminus \{k\}}  (s_k-s_\ell)  }, \quad \forall  t\in \mathbb N.
\end{equation}
\end{lemma}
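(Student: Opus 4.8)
The plan is to split the statement into its two parts: first establishing that the update law \eqref{eq:update_law_total} has exact privacy degree $p$, and second verifying the explicit reconstruction formula \eqref{eq:recon_poly_larger_than_pi}. The key observation underpinning both parts is an invariant: at every time $t$, the vector of encoded states $r_i(t) = (r_i^1(t),\dots,r_i^M(t))$ lies in the column space $\Span(\Phi_i)$, and more precisely $r_i(t) = \Phi_i\,(x_i(t), a_i(t))^T$, i.e., $r_i^k(t) = f_i(s_k,t)$ for the degree-$p_i$ polynomial $f_i(\cdot,t)$ whose constant term is $x_i(t)$. First I would prove this invariant by induction on $t$: it holds at $t=0$ by the initial condition $r_i^k(0)=f_i(s_k,0)$, and \eqref{eq:update_law_c} re-imposes it at time $t+1$ directly once $x_i(t+1)$ and $a_i(t+1)$ are defined via \eqref{eq:update_law_b}. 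So the invariant is essentially built into the update law and requires no real work — the point is simply to state it cleanly so the rest follows.

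Given the invariant, part (ii) — the reconstruction formula — is Lagrange interpolation. Since $f_i(\cdot,t)$ is a polynomial of degree at most $p_i$ and $|\mathcal I| \ge p_i+1$ with the $s_\ell$ distinct (a standing assumption on the keys, $s_i\neq s_j$), the polynomial $f_i$ is uniquely determined by its values $\{r_i^k(t) = f_i(s_k,t)\}_{k\in\mathcal I}$, and the Lagrange interpolation formula gives
\[
f_i(\theta,t) = \sum_{k\in\mathcal I} r_i^k(t) \prod_{\ell\in\mathcal I\setminus\{k\}} \frac{\theta - s_\ell}{s_k - s_\ell}.
\]
Evaluating at $\theta=0$ and recalling $x_i(t)=f_i(0,t)$ yields exactly \eqref{eq:recon_poly_larger_than_pi}. (If $|\mathcal I|>p_i+1$ this is an overdetermined but consistent interpolation, so any $p_i+1$-subset already suffices and the displayed sum over all of $\mathcal I$ agrees; one may remark this but it needs no separate argument.) This also immediately establishes condition (ii) of Definition~\ref{def:strict_privacy_degree}: for any $\mathcal I\subset\mathcal N_i$ with $|\mathcal I|\ge p_i+1$, the messages $\{\mathcal R_t^{\ell i}(x_i(t))\}_{\ell\in\mathcal I}$ — which are precisely the encoded states $\{r_i^{c_{\ell i}(t)}(t)\}_{\ell\in\mathcal I}$, and these use \emph{distinct} channels by Lemma~\ref{lem:property_Handshake}(ii) — determine $x_i(t)$.

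For part (i), the privacy-degree condition (Definition~\ref{def:privacy_degree}) requires that for $|\mathcal I|\le p_i$ the messages $\{r_i^{c_{\ell i}(t)}(t)\}_{\ell\in\mathcal I}$ reveal nothing about $x_i(t)$. Here I would invoke the information-theoretic security of Shamir's scheme: the coefficients $a_{i\ell}(t)$ play the role of the random padding, so for any $p_i$ distinct evaluation points $s_{k_1},\dots,s_{k_{p_i}}$ and any target value $x$, there is exactly one choice of $(a_{i1},\dots,a_{ip_i})$ making $f_i$ pass through the given evaluations while having constant term $x$ — this is just invertibility of the relevant $p_i\times p_i$ Vandermonde-type minor. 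Hence the observed tuple is consistent with every value of $x_i(t)$, so it carries no information. The one subtlety worth flagging — and the main obstacle — is making precise in what sense the $a_{i\ell}(t)$ are "random/unknown": the cleanest route is to argue in the deterministic/worst-case sense used in Definition~\ref{def:privacy_degree}, namely that no function of the $\le p_i$ observed shares can pin down $x_i(t)$ because the fibers of the observation map cover all possible secrets; if instead one wants a probabilistic (perfect secrecy) statement one must track the conditional distribution of the coefficients under the update law \eqref{eq:update_law_b}, which is more delicate since $a_i(t+1)$ is a deterministic projection of $\widetilde r_i(t+1)$. I expect the paper takes the combinatorial route, so I would present the Vandermonde-invertibility argument and note that the channel-distinctness from Lemma~\ref{lem:property_Handshake} guarantees the observed shares indeed correspond to $\le p_i$ \emph{distinct} points $s_k$, which is what the argument needs.
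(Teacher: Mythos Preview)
Your proposal is correct and follows essentially the same approach as the paper: the paper's proof also argues privacy via the ``for every candidate secret there is exactly one consistent degree-$p_i$ polynomial'' Vandermonde/fiber argument, and dispatches the reconstruction formula by citing Lagrange interpolation. Your write-up is in fact more careful than the paper's---you explicitly state the invariant $r_i^k(t)=f_i(s_k,t)$, invoke Lemma~\ref{lem:property_Handshake} for channel distinctness, and flag the deterministic-vs-probabilistic subtlety that the paper glosses over with ``equally likely''---but the underlying ideas are identical.
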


\begin{proof}
First we need show that for any $i \in \mathcal V$, the messages  $\{r_i^k(t)\}_{k \in \mathcal I}$ do not
disclose $x_i(t)$, if the index set $\mathcal I \subset \mathcal N_i$ satisfies $| \mathcal I| \leq p_i$.
This is because even an adversary know $p_i$ encoded state $\{r_i^k(t)\}_{i\in \mathcal I'}$, for each state candidate  $x_i'(t)$ the adversary can construct one and only one polynomial $f'(\theta,t)$ of degree $p_i$ such that $f'(0,t)=x'_i(t)$ and $f'(s_k,t)=r^k_i(t)$, $k\in \mathcal I'$. Moreover, each of these polynomials are equally likely, therefore the adversary cannot learn 
the real privacy $x_i(t)$.

The reconstruction formula \eqref{eq:recon_poly_larger_than_pi} follows the interpolation polynomial in the Lagrange form.
\end{proof}

Now it is ready to present the main theorems related to  convergence of the proposed privacy-preserving consensus scheme, including Algorithm~\ref{alg:handshake1} and update law \eqref{eq:update_law_total}.
We consider two cases separately: one is when all the agents have a same privacy degree, and another is for the case that the privacy degree is individually set.

When all the agents share an identical privacy degree, the next theorem shows that the agents can reach average consensus by the proposed method.

\begin{thm}[Common privacy degree]\label{thm:common_degree}
Let  $p= p_0 \mathbf{1}_N$ for a given $p_0 \in \mathbb N$. If the step size $\gamma\in(0,1)$, and $M \geq 2 d_{max}-1$,
then update law \eqref{eq:update_law_total} with the handshake procedure in Algorithm~1 almost surely solves Problem~\ref{prob:1} with average consensus reached, i.e.,
the algorithm has an exact privacy degree $p$, and for any initial condition $\{x_i(0)\}_{i\in \mathcal V}$,   
\begin{equation}\label{eq:consensus_in_thm_common_degree}
\lim_{t\to \infty} x_i(t)=\frac{1}{N}\sum_{j=1}^N x_j(0),\qquad \forall i\in[N],
\end{equation}
almost surely.
\end{thm}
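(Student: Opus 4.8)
The plan is to cast the dynamics \eqref{eq:update_law_total} as a linear stochastic recursion on the stacked vector of encoded states and then apply the stochastic invariance principle of Lemma~\ref{lem:invariance_sto_system}. First I would observe that, since $p = p_0\mathbf 1_N$, every agent uses the \emph{same} Vandermonde matrix $\Phi := \Phi_i \in \mathbb R^{M\times(p_0+1)}$ and the same projection $T := T_i = (\Phi^T\Phi)^{-1}\Phi^T$; moreover $\Phi T$ is the orthogonal projector onto $\mathcal S := \Span(\Phi)$. Combining \eqref{eq:update_law_b} and \eqref{eq:update_law_c} gives $r_i(t+1) = \Phi T\,\widetilde r_i(t+1)$, so along each channel $k$ the iteration is
\begin{equation}\label{eq:prop_channel}
r^k(t+1) = P\big(I_N - \gamma L^k(t)\big) r^k(t), \qquad k\in[M],
\end{equation}
where $L^k(t)$ is the (random) graph Laplacian of the subgraph of edges assigned channel $k$ at time $t$, and $P$ acts by the projection across channels at each node. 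Crucially, stacking over $k$ and using $\sum_k L^k(t) = L(t)$, the full Laplacian of $\mathcal G$ (this uses part (ii) of Lemma~\ref{lem:property_Handshake}: adjacent edges get distinct channels, so each edge is counted exactly once), the composite update preserves the ``consensus value'' in the right way.

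The key structural facts I would establish are: (a) the invariant subspace. The set of fixed points is $\{ r : r^k = \Phi(c,0,\dots,0)^T \text{ for a common } c, \text{ all } k\}$, i.e. all channels encode the \emph{same constant} polynomial with value $c$; equivalently $x_i(t)\to c$ and $a_{i\ell}(t)\to 0$. (b) Average preservation. Because each channel-wise step \eqref{eq:prop_channel} is an average-consensus step (Laplacian, doubly stochastic increments) and the projection $T$ reads off the degree-$0$ coefficient linearly, the quantity $\frac1N\sum_i x_i(t)$ — and more generally $\frac1N\sum_i r_i^k(t)$ for each $k$ — is invariant under \eqref{eq:update_law_total}; this pins down $c = \frac1N\sum_j x_j(0)$. (c) Contraction. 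I would build the Lyapunov function $V_t(r) = \sum_{k}\| r^k - \bar r^k \mathbf 1_N\|_2^2$ measuring disagreement within each channel (after removing the conserved mean $\bar r^k$), show it is radially unbounded on the relevant subspace, and prove $\mathbb E[V_{t+1}] \le c\,V_t$ for some $c\in(0,1)$. Here I would use that $P$ is a projection (nonexpansive, $\|P\|=1$) so it cannot increase $V$, while the consensus step $I-\gamma L^k(t)$ with $\gamma\in(0,1)$ strictly contracts disagreement \emph{on average}, because over the randomness of the handshake every edge of $\mathcal G$ appears in some channel with positive probability, hence the expected one-step operator has spectral gap on the orthogonal complement of $\mathbf 1_N$ (connectedness of $\mathcal G$ is implicitly assumed). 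Then Lemma~\ref{lem:invariance_sto_system} yields $\dist(r(t),\mathcal M)\to 0$ a.s. with $\mathcal M$ the zero set of $V$, which combined with (a)–(b) gives \eqref{eq:consensus_in_thm_common_degree}. The exact-privacy-degree claim is already Lemma~\ref{lem:recon_unrecon_poly}.

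The main obstacle I anticipate is the averaged contraction estimate (c): the one-step map is a \emph{random} product $P(I-\gamma L^k(t))$, and the projection $P$ couples the $M$ channels, so one cannot simply invoke a scalar averaging argument per channel. I would handle this by noting $P$ commutes with the ``per-node mean over channels'' in a way that lets $V_t$ decompose, bounding $\mathbb E\|P(I-\gamma L^k(t))y\|^2 \le \mathbb E\|(I-\gamma L^k(t))y\|^2$ for $y\perp\mathbf 1_N$, and then lower-bounding the expected spectral gap of $\sum_k \mathbb E[\,\cdot\,]$ using that the union over channels covers all edges of the connected graph $\mathcal G$ with probability one at every step. A secondary subtlety is verifying radial unboundedness and the correct identification of $\mathcal M$ when restricted to the affine subspace fixed by the conserved means; this is a routine linear-algebra check once the conserved quantities are identified, but it must be done carefully so that $\mathcal M$ is exactly the average-consensus point rather than a larger set.
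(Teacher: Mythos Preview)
Your proposal is essentially the paper's proof: write the stacked dynamics as $r(t+1)=(\overline T_0\otimes I_N)(I_{MN}-\gamma\overline L(t))r(t)$, show the channel-wise means $\bar r^k(t)$ are conserved, take the quadratic disagreement Lyapunov $V(\delta)=\|\delta\|^2$ on the subspace $\{\mathbf 1_N^T\delta^k=0\}$, use nonexpansiveness of the projection together with an explicit computation of $\mathbb E[(I-\gamma L^k(t))^T(I-\gamma L^k(t))]=I-\tfrac{2}{M}(\gamma-\gamma^2)L$ (this is the paper's Lemma~\ref{lem:expactation_Lk(t)}, and is where the handshake structure and $M\ge 2d_{\max}-1$ enter) to get a contraction factor $\bar c=1-\tfrac{2}{M}(\gamma-\gamma^2)\lambda_2<1$, then invoke Lemma~\ref{lem:invariance_sto_system} and finish with Lagrange interpolation as in Lemma~\ref{lem:recon_unrecon_poly}.

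One correction to your item~(a): the limit set is \emph{not} the constant polynomials, and $a_{i\ell}(t)\not\to 0$ in general; rather $r_i^k(t)\to\bar r^k(0)=\tfrac1N\sum_j r_j^k(0)$, so the limiting polynomial has coefficients $\tfrac1N\sum_j a_{j\ell}(0)$. This does not damage your argument---the zero set of your $V$ is precisely channel-wise consensus, and the reconstruction of $x_i$ from that consensus point already gives $\tfrac1N\sum_j x_j(0)$---but your description of $\mathcal M$ should be fixed. Also, your displayed equation~\eqref{eq:prop_channel} is not literally correct per channel (the projection mixes channels at each node, as you note later); the clean form is the stacked one above.
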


\begin{remark}
In Theorem~\ref{thm:common_degree}, the almost sure convergence is with respect to 
 probability stemming from random channel selections in Algorithm~\ref{alg:handshake1}. 
\end{remark}

Next, we continue with a more general case where agents are allowed to have distinct privacy degrees. 
In this case, the algorithm proposed still guarantees that  agents reach consensus almost surely with the desired privacy requirement, but the value
agreed upon may not necessarily be the average of the initial condition $\{x_i(0)\}_i$.

\begin{thm}[Different privacy degree]\label{thm:dif_degree}
Let vector $p\in \mathbb N^N$. If  $\gamma\in(0,1)$ and $M \geq 2 d_{max}-1$,  update law \eqref{eq:update_law_total} with handshake Algorithm~1 almost surely solves Problem~\ref{prob:1},  i.e.,
the algorithm has an exact privacy degree $p$, and for any initial condition $\{x_i(0)\}_{i\in \mathcal V}$,   
\begin{equation}\label{eq:consensus_in_thm_common_degree}
\lim_{t\to \infty} x_i(t)=x_\infty,\qquad \forall i\in[N],
\end{equation}
almost surely, for some real $x_\infty$.
\end{thm}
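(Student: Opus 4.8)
Since the update law~\eqref{eq:update_law_total} uses for every agent $i$ the degree-$p_i$ polynomial~\eqref{eq:encoding_polynomial} as encoding function, the exact-privacy-degree claim is exactly Lemma~\ref{lem:recon_unrecon_poly} and needs no restriction on $p$; it thus remains to establish almost sure consensus~\eqref{eq:consensus_in_thm_common_degree}. My plan is to stack the encoded states into $R(t)=\big(r_i^k(t)\big)_{i\in\mathcal V,\,k\in[M]}\in\mathbb R^{N\times M}$ and read~\eqref{eq:update_law_total} as a random \emph{linear} two-step recursion $R(t+1)=\mathcal P\,\mathcal W_{w_t}R(t)$, where $w_t$ is the channel assignment drawn by Algorithm~\ref{alg:handshake1}: the map $\mathcal W_w$ acts on column $k$ as $I-\gamma L^k$ with $L^k$ the Laplacian of the edge set on channel $k$, which is a matching by Lemma~\ref{lem:property_Handshake}(ii), so $I-\gamma L^k$ is symmetric doubly stochastic for $\gamma\in(0,1)$ and fixes $\mathbf{1}_N$; the map $\mathcal P$ replaces row $i$ by $P_i=\Phi_i(\Phi_i^\top\Phi_i)^{-1}\Phi_i^\top$ applied to it, i.e.\ the orthogonal projection onto $\Span\Phi_i$, because $r_i(t+1)=\Phi_iT_i\widetilde r_i(t+1)=P_i\widetilde r_i(t+1)$. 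The key algebraic fact is that $\mathbf{1}_M$ is the first column of every Vandermonde matrix $\Phi_i$, hence $\mathbf{1}_M\in B:=\bigcap_i\Span\Phi_i$; consequently the subspace $\mathcal F=\{\mathbf{1}_N b^\top:b\in B\}$ is fixed pointwise by the update, and on $\mathcal F$ the agents already agree: if $R=\mathbf{1}_N b^\top$ then $x_i=e_1^\top T_ib$ is the constant term of the unique polynomial of degree $\le p_i$ interpolating $b$ at $s_1,\dots,s_M$, and since $b\in B=\Span\Phi_{i^\ast}$ with $p_{i^\ast}=\min_i p_i$ this polynomial is the same for every $i$. It therefore suffices to prove that $R(t)$ converges almost surely to a single point of $\mathcal F$.

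I would deduce this from Lemma~\ref{lem:invariance_sto_system}. First, the orthogonal projection $\Pi_{\mathcal F}R(t)$ is invariant under the recursion (doubly stochasticity preserves column sums, and $P_iP_B=P_B$ because $B\subseteq\Span\Phi_i$), so $\hat R(t):=R(t)-R^\ast$, with $R^\ast:=\Pi_{\mathcal F}R(0)$, evolves by the same linear recursion on the subspace $\hat{\mathcal S}:=\{R:\text{row }i\text{ of }R\in\Span\Phi_i\ \forall i\}\cap\mathcal F^{\perp}$ (both $\mathcal W_w$ and $\mathcal P$ leave $\mathcal F^{\perp}$ invariant, and $\mathcal P$ puts every row back into $\Span\Phi_i$; note $\hat R(0)\in\hat{\mathcal S}$ since $r_i(0)\in\Span\Phi_i$ by~\eqref{eq:rik_encoding_function}). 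Take $V(\hat R)=\|\hat R\|_F^2$, which is radially unbounded on $\hat{\mathcal S}$. Using the identity $\|(I-\gamma L^k)v\|^2=\|v\|^2-2\gamma(1-\gamma)\sum_{(a,b)\in\mathcal E^k}(v_a-v_b)^2$ together with $\|P_i\|\le1$ one gets $\|\mathcal P\mathcal W_wR\|_F\le\|R\|_F$ for every $w$, so $c:=\max_{\hat R\in\hat{\mathcal S},\,\|\hat R\|_F=1}\mathbb E_{w}\big[\|\mathcal P\mathcal W_{w}\hat R\|_F^2\big]$ exists by compactness and is $\le1$; the whole argument hinges on showing $c<1$.

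This strict inequality is the main obstacle. If $\mathbb E_{w}\|\mathcal P\mathcal W_{w}\hat R\|_F^2=\|\hat R\|_F^2$ then $\|\mathcal P\mathcal W_{w}\hat R\|_F=\|\hat R\|_F$, hence $\|\mathcal W_{w}\hat R\|_F=\|\hat R\|_F$, for every channel assignment $w$ of positive probability, which by the displayed identity forces $\hat R_{a,c_{ab}}=\hat R_{b,c_{ab}}$ for all edges $(a,b)$. I would then prove the combinatorial lemma that in Algorithm~\ref{alg:handshake1} every channel is assigned to every edge with positive probability: just before edge $(i,j)$ is processed, each of its endpoints still has the unassigned incident edge $(i,j)$ together with at most $d_{max}-1$ other incident edges, so any earlier edge incident to $i$ or to $j$ is, when it is processed, forbidden at most $2d_{max}-3=M-2$ channels and thus has $\ge2$ admissible ones; inducting backwards over the (at most $2(d_{max}-1)$) earlier edges incident to $i$ or $j$ keeps a prescribed channel $k$ admissible with positive probability, after which $k$ can be chosen for $(i,j)$. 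Granting the lemma, $\hat R_{ak}=\hat R_{bk}$ for every edge $(a,b)$ and every channel $k$, so connectedness of $\mathcal G$ makes each column of $\hat R$ constant, i.e.\ $\hat R=\mathbf{1}_N\hat b^\top$; since every row of $\hat R$ lies in every $\Span\Phi_i$ we get $\hat b\in B$, hence $\hat R\in\mathcal F\cap\mathcal F^{\perp}=\{0\}$, a contradiction.

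Therefore $c<1$, so $\mathbb E[V(\hat R(t))]\le c^{t}V(\hat R(0))$, and Lemma~\ref{lem:invariance_sto_system} applied with $V_t\equiv V$ (for which $\mathcal M=\{0\}$) gives $\hat R(t)\to0$ almost surely. Equivalently $r_i(t)\to b^\ast$ for every $i$ with $b^\ast\in B$, and the full column rank of each $\Phi_i$ (recall $p_i<M$ and the $s_k$ are distinct) yields $(x_i(t),a_i(t))=T_ir_i(t)\to T_ib^\ast$, so $x_i(t)\to e_1^\top T_ib^\ast=:x_\infty$, the same real number for every $i$ and a non-trivial function of the initial data as required by Problem~\ref{prob:1}. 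This proves Theorem~\ref{thm:dif_degree}; the only point where a common privacy degree would buy more is that then $B=\Span\Phi$ and $\sum_i x_i(t)$ is additionally conserved, forcing $x_\infty$ to be the initial average.
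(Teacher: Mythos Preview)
Your argument is correct and arrives at the same limit point as the paper, but the contraction step is obtained by a genuinely different mechanism. The paper (Proposition~\ref{prop:r_converge_diff}) fixes the concrete target $\overline r_\infty=\frac1N(\overline T_a\otimes\mathbf 1_N^\top)r(0)$, which is your $b^\ast=P_B\big(\tfrac1N R(0)^\top\mathbf 1_N\big)$ in matrix form, and then uses the \emph{exact} moment identities $\mathbb E[L^k(t)]=\tfrac1M L$ and $\mathbb E[(L^k(t))^2]=\tfrac2M L$ from Lemma~\ref{lem:expactation_Lk(t)} to obtain the explicit one-step bound $\mathbb E\|(I_{MN}-\gamma\overline L(t))z\|^2\le \bar c\|z_s\|^2+\|z_{s_c}\|^2$ with $\bar c=1-\tfrac{2}{M}(\gamma-\gamma^2)\lambda_2$. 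By contrast you never compute these expectations: you rely only on the nonexpansiveness of $\mathcal P$ and of each $I-\gamma L^k$, take $c$ as a maximum over the unit sphere of $\hat{\mathcal S}$, and rule out $c=1$ by the structural observation that equality forces every column of $\hat R$ to be constant and every row to lie in $\Span\Phi_i$, hence $\hat R\in\mathcal F\cap\mathcal F^\perp$. Your route therefore gives no quantitative rate, but it is more robust: it works without the matching identity $(L^k)^2=2L^k$ and handles the interaction of the \emph{different} projectors $P_i$ cleanly through the row constraint built into $\hat{\mathcal S}$. One simplification you may want to note: your combinatorial lemma ``every edge gets every channel with positive probability'' is immediate from the channel-permutation symmetry of Algorithm~\ref{alg:handshake1} already used in the proof of Lemma~\ref{lem:expactation_Lk(t)} (which gives $c_{ij}(t)\sim U([M])$, hence $\mathbb P(c_{ij}(t)=k)=1/M$), so the backward induction you sketch is not needed.
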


\begin{remark}
In Theorem~\ref{thm:dif_degree}, the steady state $x_\infty$ can be computed by equation \eqref{eq:x_inf_different_degree}, (see also Remark~\ref{rmk:r_inf_different_degree}).
\end{remark}

\section{Convergence analysis}\label{sec:converge}
\subsection{Case of common privacy degree}\label{sec:converge_common_degree}
First, according to update law \eqref{eq:update_law_total}, we can derive the stacked encoded state is governed by following dynamics
\begin{subequations} \label{eq:stack_update_ki_total}
\begin{align}
\widetilde{r}^k(t+1) &= r^k(t) - \gamma L^k(t) r^k(t),\quad \forall k\in[M]\\
r_i(t+1) &= \overline{T}_i\widetilde{r}_i(t+1), \qquad  \forall k\in \mathcal V  
\end{align}
\end{subequations}
where the projection matrix $\overline{T}_i=\Phi_i(\Phi_i^T \Phi_i)^{-1}\Phi_i^T=\Phi_i T_i \in \mathbb R^{M\times M}$, vector $\widetilde{r}^{k}(t) = (\widetilde{r}_1^{k}(t), \dots, \widetilde{r}_N^{k}(t))\in \mathbb R^{N}$, $k\in[M]$, and the entry of matrix $L^k(t) \in \mathbb R^{N\times N}$ is 
\begin{equation}\label{eq:def_Lk(t)}
(L^k(t))_{ij}=\begin{cases}
-l_{ij}^k(t), \quad &\text{if } i\neq j,\\
\sum_{\ell\in \mathcal{N}_i} l_{i\ell}^k(t), \quad &\text{otherwise}. 
\end{cases}
\end{equation}
Due to the above definition, the matrix $L^k(t)$ is doubly stochastic for any $k\in[M], t\in \mathbb N$.

Moreover, as the weight $l_{ij}^k(t)=\mathbbm{1}(c_{ij}(t)=k)$ depending on the random channel selection in the handshake procedure given by Algorithm~\ref{alg:handshake1}, the  matrix $L^k(t)$ and encoded states $r_i^k(t)$ are therefore random variables for each $k\in[M], i\in \mathcal V, t\in \mathbb N$. Then the random matrix $L^k(t)$ satisfies the following lemma.

\begin{lemma}\label{lem:expactation_Lk(t)}
The random matrix $L^k(t)$ constructed by Algorithm~\ref{alg:handshake1} satisfies that
\begin{subequations}
\begin{align}
\mathbb{E}\left[L^k(t)\right] &= \frac{1}{M}\,L \, , \label{eq:expaction_Lk(t)}\\
\mathbb{E}\left[\big(L^k(t)\big)^2\right] &= \frac{2}{M}\,L \,, \label{eq:expaction_Lk(t)2}
\end{align}
\end{subequations}
for any $k\in [M]$, $t\in \mathbb N$, where $L$ is the Laplacian matrix of the connectivity graph $\mathcal G$, and $M$ is the number of channels.
\end{lemma}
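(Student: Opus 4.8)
The plan is to compute the two expectations entrywise by exploiting the structure of the random channel assignment produced by Algorithm~\ref{alg:handshake1}. First I would fix $k \in [M]$, $t \in \mathbb{N}$, and recall that $l_{ij}^k(t) = \mathbbm{1}(c_{ij}(t) = k)$, so that $(L^k(t))_{ij} = -l_{ij}^k(t)$ for $i \neq j$ and $(L^k(t))_{ii} = \sum_{\ell \in \mathcal{N}_i} l_{i\ell}^k(t)$. Since the whole matrix is built from the Bernoulli variables $l_{ij}^k(t)$, everything reduces to computing $\mathbb{E}[l_{ij}^k(t)]$ (which gives \eqref{eq:expaction_Lk(t)} once we note that $L = \sum_{k} L^k(t)$ deterministically, or directly) and, for the square, $\mathbb{E}[l_{ij}^k(t) l_{i\ell}^k(t)]$ for the various coincidence patterns of indices.

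The key step is the claim that for every edge $(i,j) \in \mathcal{E}$ one has $\mathbb{E}[l_{ij}^k(t)] = \mathbb{P}(c_{ij}(t) = k) = 1/M$, i.e. the assigned channel is uniform on $[M]$ for each edge. I would argue this by symmetry: although Algorithm~\ref{alg:handshake1} conditions each new choice on the set of channels already used at the two endpoints, at the moment edge $(i,j)$ is assigned its channel is drawn uniformly from $[M] \setminus (R_i \cup R_j)$, and there is no a priori bias among the labels in $[M]$ — the labels are exchangeable throughout the algorithm. Formally, for any permutation $\pi$ of $[M]$, relabelling all channels by $\pi$ gives a run of the algorithm of equal probability, so the law of $c_{ij}(t)$ is permutation-invariant on $[M]$, hence uniform; thus $\mathbb{E}[l_{ij}^k(t)] = 1/M$. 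Consequently $\mathbb{E}[(L^k(t))_{ij}] = -\frac{1}{M}\mathbbm{1}((i,j)\in\mathcal{E}) = \frac{1}{M} L_{ij}$ for $i \neq j$, and $\mathbb{E}[(L^k(t))_{ii}] = \frac{1}{M}\sum_{\ell\in\mathcal N_i} \mathbbm{1}((i,\ell)\in\mathcal E) = \frac{1}{M}|\mathcal N_i| = \frac{1}{M} L_{ii}$, giving \eqref{eq:expaction_Lk(t)}. (Alternatively and even more simply: $\sum_{k=1}^M l_{ij}^k(t) = 1$ deterministically since each edge is assigned exactly one channel by Lemma~\ref{lem:property_Handshake}(i), and by the same symmetry the $M$ summands are identically distributed, so each has mean $1/M$.)

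For \eqref{eq:expaction_Lk(t)2}, I would expand $(L^k(t))^2$ entrywise. Writing $B = L^k(t)$, we have $(B^2)_{ij} = \sum_\ell B_{i\ell} B_{\ell j}$. The crucial structural input is Lemma~\ref{lem:property_Handshake}(ii): adjacent edges get distinct channels, so for a fixed node $i$ at most one incident edge carries channel $k$, i.e. $\sum_{\ell \in \mathcal N_i} l_{i\ell}^k(t) \in \{0,1\}$, which means $(B_{ii})^2 = B_{ii}$ and $B_{ii} = \sum_{\ell\in\mathcal N_i} l_{i\ell}^k(t)$ with the summands having disjoint support. Using this, the off-diagonal entry $(B^2)_{ij}$ for $(i,j)\in\mathcal E$ collapses to $B_{ij}B_{jj} + B_{ii}B_{ij} + (\text{paths of length two}) = -l_{ij}^k(t)(B_{ii}+B_{jj}) + \sum_{\ell \neq i,j} l_{i\ell}^k(t) l_{\ell j}^k(t)$; but if $l_{ij}^k(t)=1$ then by (ii) no other edge at $i$ or $j$ has channel $k$, forcing $B_{ii}=B_{jj}=1$ and all length-two-path terms to vanish, so $(B^2)_{ij} = -2 l_{ij}^k(t)$ when $(i,j)\in\mathcal E$, and $(B^2)_{ij}$ for non-adjacent $i\neq j$ equals $\sum_\ell l_{i\ell}^k(t) l_{\ell j}^k(t)$, a sum over common neighbours $\ell$ of the event that both edges $(i,\ell)$ and $(\ell,j)$ carry channel $k$ — which is impossible by (ii) since those two edges are adjacent at $\ell$. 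Hence $(B^2)_{ij} = -2 l_{ij}^k(t) = 2 B_{ij}$ off the diagonal. On the diagonal, $(B^2)_{ii} = (B_{ii})^2 + \sum_{\ell\in\mathcal N_i}(l_{i\ell}^k(t))^2 = B_{ii} + B_{ii} = 2 B_{ii}$, again using $(B_{ii})^2 = B_{ii}$ and $(l_{i\ell}^k)^2 = l_{i\ell}^k$. Therefore $(L^k(t))^2 = 2 L^k(t)$ as a deterministic identity, and taking expectations and invoking \eqref{eq:expaction_Lk(t)} yields $\mathbb{E}[(L^k(t))^2] = 2\,\mathbb{E}[L^k(t)] = \frac{2}{M} L$.

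The main obstacle I anticipate is the uniformity claim $\mathbb{P}(c_{ij}(t)=k)=1/M$: the channel chosen for a given edge is \emph{not} independent of channels chosen for other edges, and the conditional distribution at the assignment step is uniform only on a random residual set $[M]\setminus(R_i\cup R_j)$, so one must argue carefully that marginalizing over everything restores exact uniformity. The clean route is the label-permutation symmetry of the algorithm together with the fact (Lemma~\ref{lem:property_Handshake}(i)) that each edge does receive some channel; once that is in place, the identity $(L^k(t))^2 = 2 L^k(t)$ is purely combinatorial and robust, and the rest is bookkeeping.
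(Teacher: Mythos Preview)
Your proof is correct and follows essentially the same approach as the paper: the permutation-symmetry argument for $\mathbb{P}(c_{ij}(t)=k)=1/M$ is exactly what the paper invokes, and your entrywise analysis of $(L^k(t))^2$ via Lemma~\ref{lem:property_Handshake}(ii) mirrors the paper's four-case computation. Your packaging is in fact slightly cleaner---you extract the deterministic identity $(L^k(t))^2 = 2L^k(t)$ and then apply \eqref{eq:expaction_Lk(t)}, whereas the paper computes the entries of $(L^k(t))^2$ case by case and takes expectations directly without stating that identity.
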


\begin{proof}
First, for any channel selections $\{c_{ij}(t)\}_{(i,j)\in\mathcal E}$ given by Algorithm~\ref{alg:handshake1}, we have $c_{ij}(t) \sim U\{[M]\}$, where  $U(A)$ is  the uniform distribution supported on a finite set $A$. This is due to the channel selection is invariant under any permutation among channels. Equivalently, it holds that $\mathbb P\{l^k_{ij}(t)=1\}=\frac{1}{M}$, for any $k\in[M]$, $(i,j)\in \mathcal E$. Therefore, we have for each $k$
\[
\mathbb{E}[l^k_{ij}(t)]=\begin{cases}
\frac{1}{M}, \quad &\text{if } (i,j)\in \mathcal E,\\
0, \quad &\text{otherwise}. 
\end{cases}
\]
According to definition \eqref{eq:def_Lk(t)}, the assertion \eqref{eq:expaction_Lk(t)} follows.

To prove \eqref{eq:expaction_Lk(t)2}, we notice that according to Lemma~\ref{lem:property_Handshake}, Algorithm~\ref{alg:handshake1} assigns any adjacent edges with distinct channels. This further implies that  except the diagonal entries, there is at most one nonzero entry in each row of matrix $L^k(t)$. So is in each column of matrix $L^k(t)$, since $L^k(t)$ is symmetric.

In order to compute the entries of  $\big(L^k(t)\big)^2$, we consider the following  cases:
\begin{enumerate}
\item If $i\neq j$ and $c_{ij}(t)=k$, 
\begin{align*}
\left( \big(L^k(t)\big)^2 \right)_{ij} &= \sum_{\ell=1}^N \left(L^k(t)\right)_{i\ell} \left(L^k(t)\right)_{\ell j}\\
 &= \left(L^k(t)\right)_{ii}\! \left(L^k(t)\right)_{i j}\!+\!\left(L^k(t)\right)_{ij} \!\left(L^k(t)\right)_{j j}\\
 &=-2.
\end{align*}
\item If $i\neq j$ and $c_{ij}(t)\neq k$, 
\begin{align*}
\left( \big(L^k(t)\big)^2 \right)_{ij} &= \left(L^k(t)\right)_{ii}\! \left(L^k(t)\right)_{i j}\!+\!\left(L^k(t)\right)_{i\ell} \!\left(L^k(t)\right)_{\ell j}\\
&=0,
\end{align*}
where we use the fact that in $i$-th row only the diagonal entry and at most one else entry (w.l.g we assume it is the $\ell$-th entry) could be nonzero.
\item If $i=j$ and there exists $\ell$ such that $c_{i\ell}(t) =k$, 
\begin{align*}
\left( \big(L^k(t)\big)^2 \right)_{ij} 
&= \left(L^k(t)\right)_{ii}\! \left(L^k(t)\right)_{i j}\!+\!\left(L^k(t)\right)_{i\ell} \!\left(L^k(t)\right)_{\ell j}\\
&=2.
\end{align*}
\item If $i=j$ and $c_{i\ell}(t) =0$, $\forall \ell$, we can show  $\left( \big(L^k(t)\big)^2 \right)_{ij}=0$.
\end{enumerate}

Hence, taking expectation of each entry in matrix $\big(L^k(t)\big)^2$ gives 
\[
 \mathbb{E}\left[ \left( \big(L^k(t)\big)^2\right)_{ij} \right]
=\begin{cases}
2\frac{|\mathcal{N}_i|}{M}, \quad &\text{if } i= j,\\
-\frac{2}{M}, \quad &\text{if } i\neq j, (i,j)\in \mathcal E,\\
0, \quad &\text{otherwise},
\end{cases}
\]
which proves \eqref{eq:expaction_Lk(t)2}.
\end{proof}

Denote by $r(t)=\left(r^1(t)^T, r^2(t)^T, \dots, r^M(t)^T \right) \in \mathbb R^{MN}$ the stacked vector of all encoded states. Because all agents share a common privacy degree $p_0$,   in \eqref{eq:update_law_b} the project matrices of all the agents are also same, i.e., $T_i= T_0 = (\Phi_0^T \Phi_0)^{-1}\Phi_0^T$  for all $i\in \mathcal V$, where $\Phi_0$ is the  Vandermonde matrix of order $p_0$. 

Then, according to the channel-wise and agent-wise updates given in \eqref{eq:stack_update_ki_total}, we can derive the overall stacked state $r(t+1)$ satisfying
\begin{equation}\label{eq:total_dynamics_commonp}
r(t+1) = \big( \overline{T}_0 \otimes I_N  \big) \big( I_{MN} - \gamma \overline{L}(t)\big) r(t),
\end{equation}
where project matrix $\overline{T}_0 = \Phi_0 T_0$, and matrix $\overline L(t)$ has the form
\begin{equation}\label{eq:def_overlin_L}
\overline{L}(t)=\begin{bmatrix}
L^1(t) & & \\
& \ddots & \\
&  &    L^M(t) 
\end{bmatrix},
\end{equation}
with each diagonal block being $L^k(t)$ defined in \eqref{eq:def_Lk(t)}.

We note that in overall dynamics \eqref{eq:total_dynamics_commonp} the matrix $\overline{T}_0$ is symmetric and satisfies $\left(\overline{T}_0\right)^\ell=\overline{T}_0$, $\forall \ell\in \mathbb Z^+$, i.e., $\overline{T}_0$ 
is an idempotent matrix. 
Therefore, it holds that any eigenvalue of matrix $\overline{T}_0$ is either $0$ or $1$. Moreover, we can verify that  the matrix $\overline{L}(t)$  satisfies
\begin{subequations}
\begin{align}
\left(A\otimes \mathbf{1}_N^T \right)\, \overline{L}(t) &= 0_{M\times MN}, \quad \forall A\in \mathbb R^{M\times M}, \label{eq:prop_overline_L_1}\\
 \overline{L}(t)\, \left(v\otimes \mathbf{1}_N \right) &= 0_{MN\times 1}, \quad \forall v\in \mathbb R^M. \label{eq:prop_overline_L_2}
\end{align}
\end{subequations}
for any $t\in \mathbb N$.

Next, the following proposition shows that on each channel $k$  the encoded state of each agent converges almost surely  to the average of the encoded states at $t=0$.

\begin{prop}\label{prop:r_converge_to_average}
Under the same assumptions as in Theorem~\ref{thm:common_degree}, it holds that
\[
\lim_{t\to \infty} r^k_i(t) =  \frac{1}{N}\sum_{j=1}^N r^k_j(0), \quad a.s.,
\]
for each $i\in\mathcal V$, $k\in [M]$.

\end{prop}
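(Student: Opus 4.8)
The plan is to decompose the stacked encoded state $r(t)$ into the channel-wise average, which the dynamics conserves, and a disagreement component, which decays to zero almost surely; the latter I would handle with the stochastic invariance principle, Lemma~\ref{lem:invariance_sto_system}. Throughout I use that $\mathcal G$ is connected, so that the Laplacian $L$ has positive algebraic connectivity $\lambda_2(L)>0$.

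\emph{Step 1 --- a conserved average.} I would first show $\mathbf 1_N^T r^k(t)=\mathbf 1_N^T r^k(0)$ for every $k\in[M]$ and all $t$. By \eqref{eq:rik_encoding_function} each $r_i(0)\in\Span(\Phi_0)$, and $r_i(t+1)=\overline T_0\widetilde r_i(t+1)\in\mathrm{range}(\overline T_0)=\Span(\Phi_0)$, so $r_i(t)\in\Span(\Phi_0)$ for all $t$ by induction, giving $\overline T_0\sum_i r_i(t)=\sum_i r_i(t)$. Since each $L^k(t)$ is a graph Laplacian, $\mathbf 1_N^TL^k(t)=0$, so \eqref{eq:stack_update_ki_total} yields $\sum_i\widetilde r_i(t+1)=\sum_i r_i(t)$; applying $\overline T_0$ and the previous identity gives $\sum_i r_i(t+1)=\sum_i r_i(t)$. \emph{Step 2 --- autonomous error dynamics.} Setting $\Pi_N=I_N-\tfrac1N\mathbf 1_N\mathbf 1_N^T$ and $e(t)=(I_M\otimes\Pi_N)r(t)$, each block is $e^k(t)=\Pi_N r^k(t)\perp\mathbf 1_N$. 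Using \eqref{eq:prop_overline_L_1}--\eqref{eq:prop_overline_L_2} one checks $(I_M\otimes\Pi_N)\overline L(t)=\overline L(t)=\overline L(t)(I_M\otimes\Pi_N)$, and the Kronecker mixed-product rule gives $(I_M\otimes\Pi_N)(\overline T_0\otimes I_N)=(\overline T_0\otimes I_N)(I_M\otimes\Pi_N)$. Multiplying \eqref{eq:total_dynamics_commonp} on the left by $I_M\otimes\Pi_N$ and simplifying produces the closed recursion
\[
e(t+1)=(\overline T_0\otimes I_N)\bigl(I_{MN}-\gamma\overline L(t)\bigr)e(t),
\]
which leaves the subspace $\mathcal S=\mathrm{range}(I_M\otimes\Pi_N)$ invariant and satisfies $e(0)\in\mathcal S$.

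\emph{Step 3 --- Lyapunov contraction and conclusion.} I would take $V(e)=\|e\|_2^2$, radially unbounded on $\mathcal S$. Since $\overline T_0\otimes I_N$ is an orthogonal projection, $\|e(t+1)\|^2\le\|(I_{MN}-\gamma\overline L(t))e(t)\|^2=\sum_{k=1}^M\|(I_N-\gamma L^k(t))e^k(t)\|^2$. Averaging over the handshake randomness at time $t$ (independent across $t$) and invoking Lemma~\ref{lem:expactation_Lk(t)},
\[
\mathbb E\bigl[\|(I_N-\gamma L^k(t))e^k\|^2\bigr]=(e^k)^T\Bigl(I_N-\tfrac{2\gamma}{M}L+\tfrac{2\gamma^2}{M}L\Bigr)e^k=(e^k)^T\Bigl(I_N-\tfrac{2\gamma(1-\gamma)}{M}L\Bigr)e^k .
\]
As $\gamma\in(0,1)$ the coefficient $\tfrac{2\gamma(1-\gamma)}{M}$ is positive; moreover $\tfrac{2\gamma(1-\gamma)}{M}\lambda_{\max}(L)\le\tfrac{M+1}{2M}<2$ (using $M\ge 2d_{max}-1$ and $\lambda_{\max}(L)\le 2d_{max}$), and $e^k\perp\mathbf 1_N$ gives $(e^k)^TLe^k\ge\lambda_2(L)\|e^k\|^2$. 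Hence, writing $F(e,w_t)=(\overline T_0\otimes I_N)(I_{MN}-\gamma\overline L(t))e$, we obtain $\mathbb E[V(F(e,w_t))]\le c\,V(e)$ for all $e\in\mathcal S$, with $c=1-\tfrac{2\gamma(1-\gamma)}{M}\lambda_2(L)\in(0,1)$. Lemma~\ref{lem:invariance_sto_system} with $V_t\equiv V$ then gives $\dist(e(t),\mathcal M)\to0$ a.s.; since $\{e\in\mathcal S:V(e)=0\}=\{0\}$ we have $\mathcal M=\{0\}$, i.e. $e(t)\to0$ a.s. Combined with Step~1, $r^k(t)\to\tfrac1N\mathbf 1_N\sum_{j=1}^N r_j^k(0)$ a.s.\ for each $k$, which is the claim.

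The hard part will be Step~3: it hinges on the exact second-moment identity $\mathbb E[(L^k(t))^2]=\tfrac2M L$ of Lemma~\ref{lem:expactation_Lk(t)} --- itself a consequence of Algorithm~\ref{alg:handshake1} assigning distinct channels to adjacent edges, so that each off-diagonal row of $L^k(t)$ has at most one nonzero entry --- together with the cancellation $I-\tfrac{2\gamma}{M}L+\tfrac{2\gamma^2}{M}L=I-\tfrac{2\gamma(1-\gamma)}{M}L$, whose positivity is precisely why $\gamma\in(0,1)$ is imposed. By contrast, the projection $\overline T_0$ causes no trouble, since only its nonexpansiveness enters the estimate, not any contractivity.
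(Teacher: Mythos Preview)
Your proposal is correct and follows essentially the same route as the paper: both arguments first show that the channel-wise sums $\sum_i r_i^k(t)$ are conserved (using $\mathbf 1_N^T L^k(t)=0$ together with $\sum_i r_i(t)\in\Span(\Phi_0)$ so that $\overline T_0$ acts as the identity on the sum), then derive the closed recursion for the disagreement on the subspace $\mathcal S=\{z:\mathbf 1_N^T z^k=0\}$, and finally apply Lemma~\ref{lem:invariance_sto_system} with $V(z)=\|z\|^2$, using nonexpansiveness of $\overline T_0\otimes I_N$ and Lemma~\ref{lem:expactation_Lk(t)} to obtain the contraction factor $c=1-\tfrac{2\gamma(1-\gamma)}{M}\lambda_2(L)$. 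The only cosmetic difference is that you phrase the disagreement via the projector $I_M\otimes\Pi_N$ rather than subtracting $\overline r(0)\otimes\mathbf 1_N$ directly; your commutation identities are exactly \eqref{eq:prop_overline_L_1}--\eqref{eq:prop_overline_L_2} in disguise. One minor slip: in Step~3 you write $\tfrac{2\gamma(1-\gamma)}{M}\lambda_{\max}(L)\le\tfrac{M+1}{2M}<2$, but the relevant threshold for $c>0$ is $<1$, which your bound $\tfrac{M+1}{2M}$ already delivers.
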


\begin{proof}
We define $\overline{r}^k(t) = \frac{1}{N}\sum_{i=1}^N r^k_i(t)$ for $k\in [M], t\in \mathbb N$, and denote the stacked vector by $\overline{r}(t)=\left(\overline{r}^1(t), \dots, \overline{r}^M(t) \right) \in \mathbb R^{M}$. Then according to the dynamics \eqref{eq:total_dynamics_commonp}, we have
\begin{align*}
\overline{r}(t+1) &= \frac{1}{N}\left( I_M \otimes \mathbf{1}_N^T\right)r(t+1)\\
&= \frac{1}{N}\left( I_M \otimes \mathbf{1}_N^T\right) \big( \overline{T}_0 \otimes I_N  \big) \big( I_{MN} - \gamma \overline{L}(t)\big) r(t)\\
&= \frac{1}{N} \big( \overline{T}_0 \otimes \mathbf{1}_N^T  \big)  r(t)\\
&= \overline{T}_0\overline{r}(t),
\end{align*}
where we employ \eqref{eq:prop_overline_L_1} in the third equality. 
Moreover,  due to the fact  that $r_i(0) \in \Span\{\Phi_0\}$, we have $\overline{r}(0)=\frac{1}{N}\sum_{i=1}^N r_i(0) \in \Span\{\Phi_0\} $. Therefore, it holds that 
\[
\overline{r}(t)= \left(\overline{T}_0\right)^{t} \overline{r}(0)=\overline{r}(0),\]
for any $t\in \mathbb N$.

Next, define disagreement vector 
\begin{equation}\label{eq:def_delta_k}
\delta^k(t) = r^k(t) - \overline{r}^k(0) \mathbf{1}_N,
\end{equation}
for each $k\in[M]$, and we have $\mathbf{1}_N^T \delta^k(t)=0$.
Furthermore, the stack vector $\delta(t)=\left(\delta^1(t)^T,\dots, \delta^M(t)^T\right)\in \mathbb R^{MN}$ satisfies
\begin{align}\label{eq:dynamics_delta_common}
\delta(t+1) &= r(t+1) - (\overline{r}(0) \otimes \mathbf{1}_N) \nonumber\\
&= \big( \overline{T}_0 \otimes I_N  \big) \big( I_{MN} - \gamma \overline{L}(t)\big) r(t)  - (\overline{r}(0) \otimes \mathbf{1}_N)\nonumber\\
&= \big( \overline{T}_0 \otimes I_N  \big) \big( I_{MN} - \gamma \overline{L}(t)\big) \big[r(t)  - (\overline{r}(0) \otimes \mathbf{1}_N)\big]\nonumber\\
&=\big( \overline{T}_0 \otimes I_N  \big) \big( I_{MN} - \gamma \overline{L}(t)\big) \delta(t),
\end{align}
where the third equality is due to \eqref{eq:prop_overline_L_2} and the fact that $\overline{T}_0 \overline{r}(0)=\overline{r}(0)$.

For any  $z\in \mathbb R^{MN}$, denote vector $z^k \in \mathbb R^N$ such that there is a partition satisfying $z=\big[(z^{1})^T, (z^{2})^T, \dots, (z^{M})^T \big]^T$.
Then define  a subspace 
\begin{equation}\label{eq:subspace_S}
\mathcal S=\left\{z\in \mathbb{R}^{MN}: \mathbf{1}_N^T z^k(t)=0, k\in[M]\right\}.
\end{equation}
For the dynamics \eqref{eq:dynamics_delta_common}, taking the Lyapunov functions $V: \mathcal S \to \mathbb R^+$ defined by $V(z)=z^Tz$,  it holds that for any $z\in \mathcal S$
\begin{align}\label{eq:pf_E_V_1}
\,&\, \mathbb E\Bigg[  V\bigg( \big( \overline{T}_0 \otimes I_N  \big) \big( I_{MN} - \gamma \overline{L}(t)\big) z  \bigg)\Bigg]\nonumber\\
\leq &\,
\mathbb E\Bigg[  \bigg\| \big( I_{MN} - \gamma \overline{L}(t)\big) z  \bigg\|^2 \Bigg]\nonumber\\ 
=&
\sum_{k=1}^M (z^{k})^T	\mathbb E\Bigg[
\big( I_{N} - \gamma L^k(t)\big)^T\big( I_{N} - \gamma L^k(t)\big) \Bigg]  z^k \nonumber\\
=&
\sum_{k=1}^M (z^{k})^T	\bigg( I_N - \frac{2}{M}(\gamma-\gamma^2)L \bigg)  z^k,
\end{align}
where 
the first inequality follows the fact that $\sigma(\overline{T}_0) \subset \{0,1\}$, the last equality is due to Lemma~\ref{lem:expactation_Lk(t)}.

As the connectivity graph is connected, its Laplacian matrix $L$ has a simple
smallest eigenvalue $0$ with the eigenvector $\mathbf{1}_N$. Moreover, 
by the Courant-Fisher theorem, it holds that
\begin{align*}
\max_{\substack{z \in \mathbb R^N \setminus 0\\z^T\mathbf{1}_N=0 }}
\frac{z^T \bigg( I_N - \frac{2}{M}(\gamma-\gamma^2)L \bigg) z}{z^Tz}
= 1- \frac{2}{M} (\gamma-\gamma^2)\lambda_2,
\end{align*}
where $\lambda_2>0$ is the second smallest eigenvalue of matrix $L$. Because $\mathbf{1}^T_N z^k=0$, $\forall k\in[M]$ in \eqref{eq:pf_E_V_1}, we have
\begin{equation}\label{eq:pf_E_V_2}
\mathbb E\Bigg[  V\bigg( \big( \overline{T}_0 \otimes I_N  \big) \big( I_{MN} - \gamma \overline{L}(t)\big) z  \bigg)\Bigg] 
\leq
\overline{c}\, V(z), 
\end{equation}
for any $z\in\mathcal S$, where $\overline{c}=1- \frac{2}{M} (\gamma-\gamma^2)\lambda_2 $. Due to the ansatz that $\gamma\in (0,1)$, we can show that
\[
\overline{c} \in \big[1- \frac{\lambda_2}{2M} \, ,\, 1\big).
\]
In addition, by the Gershgorin theorem, the eigenvalue $\lambda_2$ is located in the interval $(0, 2 d_{max}]$. Combining this with  $M\geq 2d_{max}-1$, we have $\overline{c} \in [\frac{d_{max}-1}{2d_{max}-1},1)$.

In the end, by using Lemma~\ref{lem:invariance_sto_system}, we have in the disagreement dynamics \eqref{eq:dynamics_delta_common},
\[\lim_{t\to \infty} \dist(\delta(t), \mathcal M) =0, \quad a.s., \]
where $\mathcal M= \{z\in\mathbb R^{MN}:z=0\}$. According to \eqref{eq:def_delta_k}, this suggests that
\[
\lim_{t\to \infty} r^k_i(t) =  \frac{1}{N}\sum_{j=1}^N r^k_j(0), \quad a.s.,
\]
for each $i\in\mathcal V$, $k\in [M]$.

\end{proof}

Now we are ready to complete the proof of Theorem~\ref{thm:common_degree}.

\begin{proof}[\textbf{Proof of Theorem~\ref{thm:common_degree}}]

By Proposition~\ref{prop:r_converge_to_average}, we have for each $k\in [M]$ and $i\in\mathcal V$, 
\begin{align*}
\lim_{t\to \infty} r^k_i(t) 
&=  \frac{1}{N}\sum_{j=1}^N f_j(s_k,0), \quad a.s.\\
&=  \frac{1}{N} \sum_{j=1}^N \left( \sum_{\ell=1}^{p_0} a_{j\ell}(0) (s_k)^\ell + x_j(0) \right)\\
&= \sum_{\ell=1}^{p_0} \left(\frac{1}{N} \sum_{j=1}^N a_{j\ell}(0)\right) (s_k)^\ell + \left(\frac{1}{N} \sum_{j=1}^N  x_j(0)\right)\\
&=:f_\infty(s_k),
\end{align*}
where the polynomial $f_\infty(\theta)=\sum_{\ell=1}^{p_0} b_{\ell} \theta^\ell + b_0$ with $b_0=\frac{1}{N}\sum_j  x_j(0)$, and $b_\ell=\frac{1}{N} \sum_j a_{j\ell}(0)$.

Next, by using Lemma~\ref{lem:recon_unrecon_poly} with $\mathcal I=[M]$, we have for each $i\in \mathcal V$,
\begin{align*}
\lim_{t\to \infty} x_i(t)
&= \sum_{k\in \mathcal I} \lim_{t\to \infty} r_i^k (t) \frac{\prod_{\ell\in \mathcal I \setminus \{k\}}  (-s_\ell)}{ \prod_{\ell\in \mathcal I \setminus \{k\}}  (s_k-s_\ell)  }\\
&=\sum_{k\in \mathcal I} f_\infty(s_k) \frac{\prod_{\ell\in \mathcal I \setminus \{k\}}  (-s_\ell)}{ \prod_{\ell\in \mathcal I \setminus \{k\}}  (s_k-s_\ell)  } \quad a.s.\\
&= b_0,
\end{align*}
in which we use the Lagrange interpolation in the last equality.
\end{proof}

\subsection{Case of different privacy degree}
Compared to the case in  section~\ref{sec:converge_common_degree}, the various privacy degree $p\in \mathbb{N}^N$ makes the project matrices $T_i$ in \eqref{eq:update_law_b} different from the agents. 
Then in order to find the dynamics for the overall state $r(t)$ as in \eqref{eq:total_dynamics_commonp}, we introduce a  permutation matrix $G \in \mathbb{R}^{MN \times MN}$ defined by
\[
G=
\begin{bmatrix}
e_{\pi(1)}\\
e_{\pi(2)}\\
\vdots\\
e_{\pi(MN)}
\end{bmatrix},
\]
where $e_i$ is the $i$-th row of the identity matrix $I_{MN}$, and  mapping $\pi: [MN] \to [MN]$ is defined by  
\begin{equation}\label{eq:permutation_map}
\pi(\ell) = \left(\ell- M \floor*{\frac{\ell-1}{M}}-1 \right)N + \ceil*{\frac{\ell}{M}}.
\end{equation}
We note that $G$ is an orthogonal matrix.
Denote by $r_a(t) = \left(r_1(t)^T, r_2(t)^T, \dots, r_N(t)^T \right) \in \mathbb R^{MN}$ the stacked encoded state in the agent order. It can be verified that $r_a(t) =  G r(t)$.
\begin{example}
Take an example with $N=2$, $M=3$, then under the permutation mapping defined in \eqref{eq:permutation_map}, we have
\[
r_a(t)= G \, r(t) = 
\begin{bmatrix}
e_{\pi(1)}\\
e_{\pi(2)}\\
e_{\pi(3)}\\
e_{\pi(4)}\\
e_{\pi(5)}\\
e_{\pi(6)}
\end{bmatrix}
r(t)
=\begin{bmatrix}
e_{1}\\
e_{3}\\
e_{5}\\
e_{2}\\
e_{4}\\
e_{6}
\end{bmatrix}
 \begin{bmatrix}
r^1_1\\
r^1_2\\
r^2_1\\
r^2_2\\
r^3_1\\
r^3_2\\
\end{bmatrix}
=
\begin{bmatrix}
r^1_1\\
r^2_1\\
r^3_1\\
r^1_2\\
r^2_2\\
r^3_2\\
\end{bmatrix}.
\]
\end{example}

Then it can be derived that the dynamics for overall encoded state $r(t)$ with $p\in \mathbb{N}^N$ is
\begin{equation}\label{eq:total_dynamics_distinctp}
r(t+1)= G^T \Pi\, G\, (I_{MN}- \gamma \overline{L}(t) )\, r(t),
\end{equation}
where $\overline{L}(t)$ is defined in \eqref{eq:def_overlin_L}, and matrix $\Pi \in \mathbb{R}^{MN \times MN}$ is the block diagonal matrix given by
\[
\Pi=
\begin{bmatrix}
\overline{T}_1 & & & \\
& \overline{T}_2 & &\\
& & \ddots   &\\
&  &   & \overline{T}_N 
\end{bmatrix},
\]
with $\overline{T}_i = \Phi_i T_i \in \mathbb{R}^{M\times M}$.

Next, we denote $\overline{p} = \min_{i\in\mathcal V} {p_i}$, and define vector 
\begin{equation}\label{eq:definition_v_i}
v_i=\left((s_1)^{i-1}, (s_2)^{i-1}, \dots, (s_M)^{i-1}\right)\in \mathbb{R}^M,
\end{equation}
for each $i\in[\overline{p}+1]$, where $s_k$ is the communication key given in \eqref{eq:rik_encoding_function}.
Furthermore, denote 
\begin{equation}\label{eq:definition_overline_T_a}
\overline{T}_{a} =\Phi_{\ell} {T}_{\ell} \in \mathbb{R}^{M\times M},
\end{equation}
in which $\ell$ satisfies $p_{\ell}=\overline{p}$.
Then the next lemma reveals the invariance owned by the projection matrix $G^T \Pi G$.
\begin{lemma}\label{lem:property_GTG_Tp}
For any $p\in \mathbb{N}^N$, the following equalities hold
\begin{enumerate}
\item[i)] $G^T \Pi G \left(v_i \otimes \mathbf{1}_N  \right) = v_i  \otimes \mathbf{1}_N$, 
\item[ii)] $
G^T \Pi G \left(\overline{T}_{a} \otimes \mathbf{1}_N  \right) = \overline{T}_{a} \otimes \mathbf{1}_N$,
\end{enumerate} 
for all $i \in [\overline{p}+1]$.
\end{lemma}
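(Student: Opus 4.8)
The plan is to reduce everything to two elementary facts: (a) $\overline T_j = \Phi_j T_j = \Phi_j(\Phi_j^T\Phi_j)^{-1}\Phi_j^T$ is the orthogonal projection onto $\Span(\Phi_j)$, and by \eqref{eq:Vandermonde} and \eqref{eq:definition_v_i} the columns of $\Phi_j$ are exactly $v_1,\dots,v_{p_j+1}$, so $\Span(\Phi_j)=\Span\{v_1,\dots,v_{p_j+1}\}$; and (b) the orthogonal permutation $G$ turns the channel-major stack $r$ into the agent-major stack $r_a=Gr$, as already recorded before the lemma.

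For part i), I would first note that for $i\le\overline p+1=\min_j p_j+1$ we have $i\le p_j+1$ for every $j\in\mathcal V$, hence $v_i\in\Span(\Phi_j)$ and therefore $\overline T_j v_i=v_i$ for all $j$. Next I would check that $G(v_i\otimes\mathbf 1_N)=\mathbf 1_N\otimes v_i$: a short index computation on \eqref{eq:permutation_map} (writing $\ell=(a-1)M+b$ with $a\in[N]$, $b\in[M]$, one gets $\pi(\ell)=(b-1)N+a$) shows $G$ is precisely the channel-major-to-agent-major relabeling, so the channel-major vector $v_i\otimes\mathbf 1_N$, whose $k$-th channel block is $(s_k)^{i-1}\mathbf 1_N$, is reshuffled into the agent-major vector all of whose $N$ agent blocks equal $v_i$, i.e.\ $\mathbf 1_N\otimes v_i$. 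Then $\Pi(\mathbf 1_N\otimes v_i)$ is, since $\Pi=\mathrm{blockdiag}(\overline T_1,\dots,\overline T_N)$, the block vector whose $j$-th block is $\overline T_jv_i=v_i$, so it equals $\mathbf 1_N\otimes v_i$ again; and since $G$ is orthogonal, $G^T(\mathbf 1_N\otimes v_i)=v_i\otimes\mathbf 1_N$. Composing these three steps gives $G^T\Pi G(v_i\otimes\mathbf 1_N)=v_i\otimes\mathbf 1_N$.

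For part ii), I would use that $\overline T_a=\Phi_\ell T_\ell$ (with $p_\ell=\overline p$) is the orthogonal projection onto $\Span\{v_1,\dots,v_{\overline p+1}\}$, hence each of its columns is a linear combination of $v_1,\dots,v_{\overline p+1}$; consequently every column of the $MN\times M$ matrix $\overline T_a\otimes\mathbf 1_N$ is a linear combination of $v_1\otimes\mathbf 1_N,\dots,v_{\overline p+1}\otimes\mathbf 1_N$. Since $G^T\Pi G$ fixes each of these vectors by part i), by linearity it fixes every column of $\overline T_a\otimes\mathbf 1_N$, which is exactly the matrix identity $G^T\Pi G(\overline T_a\otimes\mathbf 1_N)=\overline T_a\otimes\mathbf 1_N$.

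I expect the only real bookkeeping step to be verifying the explicit form of $\pi$ and keeping the Kronecker orderings consistent between the channel-major and agent-major conventions; once that is in place, both claims follow immediately from $\overline T_j$ being a projection whose range contains all of $v_1,\dots,v_{\overline p+1}$.
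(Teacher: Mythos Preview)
Your proposal is correct and follows essentially the same route as the paper: both parts rest on the single observation that $\overline T_j v_i=v_i$ for every $j\in\mathcal V$ and $i\in[\overline p+1]$, and part ii) is obtained from part i) by linearity over the column span $\Span\{v_1,\dots,v_{\overline p+1}\}$ of $\overline T_a$. The paper's proof is terser --- it simply asserts i) from $\overline T_j v_i=v_i$ without unpacking $G$, and writes ii) as the factorization $\overline T_a\otimes\mathbf 1_N=(\Phi_\ell\otimes\mathbf 1_N)T_\ell$ --- whereas you make the permutation step $G(v_i\otimes\mathbf 1_N)=\mathbf 1_N\otimes v_i$ explicit; this extra bookkeeping is helpful and fills in exactly the detail the paper leaves to the reader.
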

\begin{proof}
Property i) follows the fact that 
$
\overline{T}_j v_i = v_i
$,
for any $j\in \mathcal{V}$, $i\in [\overline{p}+1]$.
Take $\ell\in\mathcal{V}$ such that $p_\ell=\overline{p}$, and we notice that the Vandermonde matrix defined in \eqref{eq:Vandermonde} satisfies $\Phi_{\ell}=(v_1, v_2, \dots, v_{\overline{p}+1})$. This leads to that $G^T \Pi G \left(\Phi_{\ell} \otimes \mathbf{1}_N  \right) = \Phi_{\ell}  \otimes \mathbf{1}_N$. Therefore, it holds that
\[
G^T \Pi G \left(\overline{T}_{a} \otimes \mathbf{1}_N  \right) = 
G^T \Pi G \left(\Phi_{\ell} \otimes \mathbf{1}_N  \right)
T_{\ell}
=
 \overline{T}_{a} \otimes \mathbf{1}_N,
\]
where we use the fact  $ \overline{T}_{a}=  \Phi_{\ell} T_{\ell}$ twice.
\end{proof}

Next, define  vector $\overline{r}_\infty = \frac{1}{N} \left( \overline{T}_{a}  \otimes \mathbf{1}_N^T \right) r(0) \in \mathbb{R}^M$. The following proposition states that the encoded state $r_i(t)\in \mathbb{R}^M$ of all the agents eventually achieve consensus as $t\to \infty$. 

\begin{prop}\label{prop:r_converge_diff}
Under the same assumptions with Theorem~\ref{thm:dif_degree}, it holds that
\[
\lim_{t\to \infty} r_i(t) =  \overline{r}_\infty, \quad a.s.,
\]
for each $i\in\mathcal V$.

\end{prop}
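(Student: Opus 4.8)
The plan is to mirror the structure of the proof of Proposition~\ref{prop:r_converge_to_average}, but using the invariance properties of the projection $G^T \Pi G$ from Lemma~\ref{lem:property_GTG_Tp} in place of the single idempotent $\overline{T}_0$. First I would introduce the candidate limit. Since $\overline{T}_a = \Phi_\ell T_\ell$ is idempotent with column space $\Span\{v_1,\dots,v_{\overline p+1}\}$, one checks that $\overline{r}_\infty = \frac1N(\overline{T}_a \otimes \mathbf{1}_N^T) r(0)$ is preserved: using property (ii) of Lemma~\ref{lem:property_GTG_Tp} together with the property \eqref{eq:prop_overline_L_1} of $\overline{L}(t)$ (which kills any right-multiplication of $\mathbf 1_N^T$-type factors), I would show $\frac1N(\overline{T}_a \otimes \mathbf 1_N^T) r(t)$ is constant in $t$, hence equal to $\overline{r}_\infty$ for all $t$. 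The natural object to iterate is really $\overline{T}_a \overline{r}(t)$ where $\overline r(t) = \frac1N(I_M\otimes\mathbf 1_N^T) r(t)$; because each $\overline T_j$ fixes $v_1,\dots,v_{\overline p+1}$ and $r_i(0)\in\Span\{\Phi_i\}\supseteq\Span\{v_1,\dots,v_{\overline p+1}\}$ is not automatic here — in fact $\Span\{\Phi_\ell\}$ is the \emph{smallest} such space — so I would be careful and instead track only the $\overline{T}_a$-projection of the average, which is exactly what property (ii) guarantees to be invariant.

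Next I would define the disagreement vector $\delta(t) = r(t) - (\overline r_\infty \otimes \mathbf 1_N)$, or more precisely $\delta(t) = r(t) - G^T\Pi G$-fixed part, and show it satisfies a linear recursion $\delta(t+1) = G^T\Pi G\,(I_{MN} - \gamma\overline L(t))\,\delta(t)$. This step uses property \eqref{eq:prop_overline_L_2} (so that $\overline L(t)$ annihilates the fixed part) together with both invariance identities of Lemma~\ref{lem:property_GTG_Tp} to confirm that $G^T\Pi G$ fixes $\overline r_\infty \otimes \mathbf 1_N$. Then I would run the same Lyapunov argument as in Proposition~\ref{prop:r_converge_to_average}: take $V(z) = z^Tz$ on the appropriate subspace $\mathcal S$ (vectors orthogonal, channel-wise, to $\mathbf 1_N$ and additionally lying in the range complementary to the fixed directions), bound $\mathbb E[V(G^T\Pi G(I_{MN}-\gamma\overline L(t))z)] \le \mathbb E[\|(I_{MN}-\gamma\overline L(t))z\|^2]$ using that $G^T\Pi G$ is an orthogonal-conjugate of a matrix with spectrum in $\{0,1\}$ (all $\overline T_i$ idempotent, so $\|G^T\Pi G\| \le 1$), and then invoke Lemma~\ref{lem:expactation_Lk(t)} to get the contraction factor $\overline c = 1 - \frac{2}{M}(\gamma-\gamma^2)\lambda_2 \in (0,1)$ exactly as before. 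Finally Lemma~\ref{lem:invariance_sto_system} gives $\dist(\delta(t),\{0\})\to 0$ a.s., i.e. $r_i(t)\to \overline r_\infty$ a.s. for every $i$.

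The main obstacle I anticipate is the bookkeeping around the permutation $G$ and the non-uniform projections: in the common-degree case the key step was the clean identity $\frac1N(I_M\otimes\mathbf 1_N^T)(\overline T_0\otimes I_N) = \frac1N(\overline T_0\otimes\mathbf 1_N^T)$, and here the analogue must go through $G^T\Pi G$, whose action does not factor as a Kronecker product. The correct substitute is precisely Lemma~\ref{lem:property_GTG_Tp}, so the real work is verifying that the "fixed subspace" of $G^T\Pi G$ relevant to the dynamics is exactly $\Span\{v_i\otimes\mathbf 1_N : i\in[\overline p+1]\}$ and that $\overline L(t)$ maps the disagreement space into itself while vanishing on that fixed subspace; once that is set up, the stochastic contraction estimate and the appeal to Lemma~\ref{lem:invariance_sto_system} are essentially identical to the common-degree case. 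A secondary subtlety is that the limit $\overline r_\infty$ need not lie in $\Span\{\Phi_i\}$ for every $i$ (it lies in $\Span\{\Phi_\ell\}$ for the minimizing $\ell$), which is the structural reason why only consensus — not average consensus — can be claimed in Theorem~\ref{thm:dif_degree}; I would flag this but it does not affect the convergence proof itself.
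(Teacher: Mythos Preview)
Your overall strategy---define $\delta(t)=r(t)-\overline r_\infty\otimes\mathbf 1_N$, derive the recursion $\delta(t+1)=G^T\Pi G\,(I_{MN}-\gamma\overline L(t))\,\delta(t)$ from Lemma~\ref{lem:property_GTG_Tp} and \eqref{eq:prop_overline_L_2}, use $\sigma(G^T\Pi G)\subset\{0,1\}$ to drop the projector, and close with Lemma~\ref{lem:invariance_sto_system}---is exactly the paper's route.

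The gap is your assertion that the stochastic contraction step is ``essentially identical to the common-degree case'' on a subspace of vectors ``orthogonal, channel-wise, to $\mathbf 1_N$''. In the different-degree case $\delta(t)$ does \emph{not} stay in that subspace $\mathcal S$ of \eqref{eq:subspace_S}: because the projectors $\overline T_i$ vary across agents, the channel-wise sums $\sum_i r_i^k(t)$ are no longer conserved (already $\mathbf 1_N^T\delta^k(0)=\big[\sum_i r_i(0)-\overline T_a\sum_i r_i(0)\big]_k$, which is generically nonzero once some $p_i>\overline p$). The only conserved linear functionals are $(v_i\otimes\mathbf 1_N)^T r(t)$ for $i\in[\overline p+1]$, so the paper first proves $\delta(t)$ evolves in the larger invariant subspace $\overline{\mathcal S}=\{z:(v_i\otimes\mathbf 1_N)^Tz=0,\ i\in[\overline p+1]\}$, and then introduces the orthogonal splitting $\overline{\mathcal S}=\mathcal S\oplus\mathcal S_c$ with $\mathcal S_c=\Span\{u_j\otimes\mathbf 1_N\}$, the $u_j$ completing the $v_i$ in $\mathbb R^M$. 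The crucial point you miss is that on $\mathcal S_c$ one has $\overline L(t)z=0$ by \eqref{eq:prop_overline_L_2}, so $(I-\gamma\overline L(t))$ yields \emph{no} contraction there, and the single-factor bound $\mathbb E[V(\cdot)]\le\overline c\,V(z)$ you claim simply fails for $z\in\mathcal S_c$. The paper's Lyapunov estimate is accordingly the two-piece bound $\overline c\,\|z_s\|^2+\|z_{s_c}\|^2$ in \eqref{eq:pf_diff_E_V_1}; any damping of the $\mathcal S_c$-component has to come from the projector $G^T\Pi G$ (via $\overline T_\ell u_j=0$ for the minimal-degree agent $\ell$), not from the Laplacian term. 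This decomposition and the separate treatment of the two contraction mechanisms are the missing ingredients in your plan.
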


\begin{proof}
Define disagreement vector by 
\begin{equation}\label{eq:def_delta_diff_k}
\delta(t) = r(t) -  \overline{r}_\infty \otimes \mathbf{1}_N,
\end{equation}
and moreover we have the agreement goal $ \overline{r}_\infty \otimes \mathbf{1}_N $ satisfies
\begin{align}
 \overline{r}_\infty \otimes \mathbf{1}_N &=   \bigg( \frac{1}{N}\left( \overline{T}_{a}  \otimes \mathbf{1}_N^T \right) r(0) \bigg) \otimes \mathbf{1}_N \nonumber \\
&=   \frac{1}{N}\left( \overline{T}_{a}  \otimes \mathbf{1}_{N\times N} \right) r(0) \label{eq:pf_diff_decomposition_r_inf_1} \\
&= \frac{1}{N}\left( \overline{T}_{a}  \otimes \mathbf{1}_{N} \right) \left(I_{M}  \otimes \mathbf{1}_{N}^T \right) r(0).\label{eq:pf_diff_decomposition_r_inf_2}
\end{align}
Then the dynamics of disagreement $\delta(t)$ fulfills
\begin{align}\label{eq:dynamics_delta_diff} 
\delta(t+1) &= 
G^T \Pi G(I_{MN}- \gamma \overline{L}(t) )\, r(t) - \overline{r}_\infty \otimes \mathbf{1}_N \nonumber \\
&= G^T \Pi G \bigg( (I_{MN}- \gamma \overline{L}(t) )\, r(t) - \overline{r}_\infty \otimes \mathbf{1}_N \bigg) \nonumber \\
&= G^T \Pi G (I_{MN}- \gamma \overline{L}(t) ) \delta(t),
\end{align}
where the second equality is due to \eqref{eq:pf_diff_decomposition_r_inf_2} and ii) in Lemma~\ref{lem:property_GTG_Tp}, and the last equality follows \eqref{eq:prop_overline_L_2}.

Moreover, we can show that $(v_i \otimes \mathbf{1}_N)^T \delta(t)=0$, $\forall t$, where $v_i$ is defined in \eqref{eq:definition_v_i}  for $i\in[\overline{p}+1]$. This is because according to dynamics \eqref{eq:dynamics_delta_diff},
\begin{align*}
(v_i \otimes \mathbf{1}_N)^T \delta(t+1)
 &= (v_i \otimes \mathbf{1}_N)^T   G^T \Pi G (I_{MN}- \gamma \overline{L}(t) ) \delta(t)\\
 &=(v_i \otimes \mathbf{1}_N)^T \delta(t),
\end{align*}
where we use i) in Lemma~\ref{lem:property_GTG_Tp} with $i=1$, and the fact that $(v_i \otimes \mathbf{1}_N)^T \overline{L}(t)=0$, $\forall t$.
Then, this further implies that 
\begin{align*}
 (v_i \otimes \mathbf{1}_N)^T\delta(t) &= (v_i \otimes \mathbf{1}_N)^T \delta(0) \\
& = (v_i \otimes \mathbf{1}_N)^T \! \left( \!r(0) \!- \! \frac{1}{N}\left( \overline{T}_{a} \! \otimes \! \mathbf{1}_{N \times N}\right)  r(0) \!\right)\\
&=0,
\end{align*}
where we apply equation \eqref{eq:pf_diff_decomposition_r_inf_1} and   $\overline{T}_{a} v_i = v_i$. Therefore, we know the dynamics \eqref{eq:dynamics_delta_diff} indeed evolves in subspace $\overline{\mathcal S}$, where $\overline{\mathcal S}=\{z\in \mathbb{R}^{MN}: (v_i \otimes \mathbf{1}_N)^T  z=0, i\in [\overline{p}+1]\}$.

Next, we construct an orthogonal matrix $U=[U_1, U_2]\in \mathbb{R}^{M\times M}$ with $U_1=[v_1, v_2, \dots, v_{\overline{p}+1}]$ and $U_2 = [ u_1, u_2, \dots, u_{M-\overline{p}-1}]$. 
Then further denote subspace $\mathcal{S}_c\!=\!\Span\{u_1\otimes \mathbf{1}_N,  \dots, u_{M-\overline{p}-1} \otimes \mathbf{1}_N\}$, and it holds that
\begin{equation}\label{eq:decomposition_S_Sc}
\overline{\mathcal{S}} = \mathcal{S} \oplus \mathcal{S}_c,
\end{equation}
i.e., $\mathcal{S}_c$ is the orthogonal complement of subspace $\mathcal{S}$ in $\overline{\mathcal{S}}$, where subspace $\mathcal{S}$ is defined in \eqref{eq:subspace_S}.
In order to show \eqref{eq:decomposition_S_Sc}, we only need verify that 
\begin{enumerate}
\item[i)] $\mathcal{S} \subset \overline{\mathcal{S}}$ and $\mathcal{S}_c \subset \overline{\mathcal{S}}$,
\item[ii)] $\dim( \mathcal{S} + \mathcal{S}_c)=\dim(\overline{\mathcal{S}})$,
\item[iii)] $\mathcal{S} \perp \mathcal{S}_c$, 
\end{enumerate}
where i) follows the fact that $v_i^T u_j =0$, $\forall i\in[\overline{p}+1], j\in [M-\overline{p}-1]$, ii) is due to $\dim(\mathcal{S} + \mathcal{S}_c)= \dim(\mathcal{S})+\dim(\mathcal{S}_c) + \dim(\mathcal{S} \cap \mathcal{S}_c)=(MN-M)+(M-\overline{p}-1)+0=MN-\overline{p}-1$, and  iii) is because for any $z_s\in \mathcal{S}$, and $z_{s_c}\in \mathcal{S}_c$, we have $z_{s}^Tz_{s_c}=0$.

Now we can define the Lyapunov function for the dynamics \eqref{eq:dynamics_delta_diff} as 
 $V: \overline{\mathcal{S}} \to \mathbb R^+$ with $V(z)=z^Tz$.
Then for any $z\in \overline{\mathcal{S}}$ we have
\begin{align}\label{eq:pf_diff_E_V_1}
\,&\, \mathbb E\left[  V\bigg( G^T \Pi G (I_{MN}- \gamma \overline{L}(t) ) z  \bigg)\right]\nonumber\\
\leq &\,
\mathbb E\left[  \bigg\| \big( I_{MN} - \gamma \overline{L}(t)\big) z  \bigg\|^2 \right]\nonumber\\
=&\,
\mathbb E\left[  \bigg\| \big( I_{MN} - \gamma \overline{L}(t)\big) z_s +z_{s_c}  \bigg\|^2 \right]\nonumber
\\
=&\,
\mathbb E\left[  \bigg\| \big( I_{MN} \!-\! \gamma \overline{L}(t)\big) z_s\bigg\|^2  \!\!\! +\!\left\| z_{s_c} \right\|^2 \!\!\!+\!2 z_s^T\big( I_{MN}\! - \!\gamma \overline{L}(t)\big)z_{s_c}  \! \right]\nonumber
\\
\leq & \,
\overline{c}\left\| z_{s} \right\|^2 +\left\| z_{s_c} \right\|^2,
\end{align}
where $z= z_{s}+z_{s_c}$ with $z_s\in \mathcal{S}$, $z_{s_c}\in \mathcal{S}_c$ due to \eqref{eq:decomposition_S_Sc}, and $\overline{c}$ is defined in \eqref{eq:pf_E_V_2}. 
Note that in \eqref{eq:pf_diff_E_V_1} we use  $\sigma( G^T \Pi G) \subset \{0,1\}$ in the first inequality, apply $\overline{L}(t)z_{s_c}=0 $ to the first equality and the last inequality, and 
also apply a similar technique used in \eqref{eq:pf_E_V_2} to obtain the last inequality. 

In the end, as $\overline{c} \in [\frac{d_{max}-1}{2d_{max}-1},1)$, we have $\mathbb E\left[  V\left( G^T \Pi G (I_{MN}- \gamma \overline{L}(t) ) z  \right)\right] \leq \tilde{c} V(z)$ with $\tilde{c} < 1$, then the assertion follows by using Lemma~\ref{lem:invariance_sto_system}.
\end{proof}

Next, we are ready to prove Theorem~\ref{thm:dif_degree}.

\begin{proof}[\textbf{Proof of Theorem~\ref{thm:dif_degree}}]
By combining Lemma~\ref{lem:recon_unrecon_poly} and Proposition~\ref{prop:r_converge_diff}, Theorem~\ref{thm:dif_degree} follows with 
\begin{equation*}
\lim_{t\to \infty} x_i(t)=x_\infty,\qquad \forall i\in[N],
\end{equation*}
almost surely, for any initial condition $\{x_i(0)\}_{i\in \mathcal V}$, where 
\begin{equation}\label{eq:x_inf_different_degree}
x_\infty=\sum_{k\in [M]} [\overline{r}_\infty]_k  \frac{\prod_{\ell\in [M]\setminus \{k\}}  (-s_\ell)}{ \prod_{\ell\in [M] \setminus \{k\}}  (s_k-s_\ell)  }, 
\end{equation}
with $[\overline{r}_\infty]_k$ is the $k$-th element of $\overline{r}_\infty$.
\end{proof}

\begin{remark}\label{rmk:r_inf_different_degree}
For dynamics \eqref{eq:total_dynamics_distinctp}, we give in Appendix~\ref{sec:appendix_intuition} the intuition on how to find the steady state $\overline{r}_\infty = \frac{1}{N} \left( \overline{T}_{a}  \otimes \mathbf{1}_N^T \right) r(0)$ on which each node reaches consensus.
\end{remark}

\begin{remark}\label{rem:high_dim}
In Problem~\ref{prob:1}, if the agent's state has a dimension $d>1$, a natural way to extend the proposed method is to applying the  update law \eqref{eq:update_law_total} element-wise for each dimension. It will result in a consensus problem for in total $(M\times N \times d)$ encoded states. 
\end{remark}

\section{Numerical examples}\label{sec:simu}
To illustrate the theory, in this section, two numerical examples are given. In the first, we consider a network with all the nodes owning an identical privacy degree, and in the second, each node is allowed to have a distinct privacy degree.

\subsection{Example with common privacy degree}\label{subsec:identical_p}
We consider a network involving $N=6$ nodes, whose connection topology is specified by an undirected graph as shown in Figure~\ref{Fig:graph}.
To solve the problem of privacy-perserving consensus raised in Problem~\ref{prob:1} with all the nodes sharing a common  privacy degree $2$, i.e., $p= 2 \mathbf{1}_N$, we apply the proposed Algorithm~\ref{alg:handshake1} and update law \eqref{eq:update_law_total}. 
\begin{figure}[thp]
  \centering
  \includegraphics[width=0.25\textwidth]{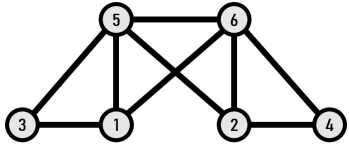}
  \caption{The interconnection graph of the network}\label{Fig:graph}
\end{figure}

According to the requirement in Theorem~\ref{thm:common_degree}, we take the number of channels $M=2d_{max}-1=7$, 
$\gamma = 0.95$, 
and communication keys $S=(1, 2, \dots, M)$. 
Moreover, for each node $i\in [N]$, the initial encoding functions $f_i(\theta,0)$ is constructed  following \eqref{eq:encoding_polynomial} with a set of randomized and privately known initial coefficients $\left\{a_{i1}(0),\, a_{i2}(0)\right\}$.
Then the initial encoded state can be accordingly computed as $r_i^{k}(0) = f_i(s_k,0)$, for $i\in\mathcal{V}, k\in[M]$. 

\begin{figure}[thp]
  \includegraphics[width=0.5\textwidth, trim={0.8cm 1cm 0 0cm},clip]{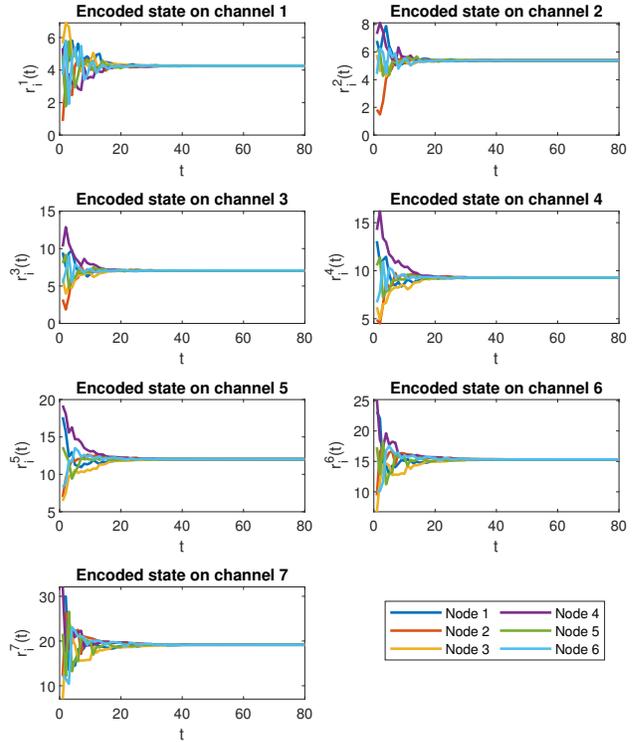}
  \caption{The trajectory of the encoded state on each channel $k\in[M]$ in the case with a common privacy degree.}\label{Fig:Estate_commonP}
\end{figure}

Next, we implement update law \eqref{eq:update_law_total} with the channel selection variable $c_{ij}(t)$ assigned through Algorithm~\ref{alg:handshake1} at the beginning of each time instant. 
The trajectory of the encoded state on all the channels are shown  in Figure~\ref{Fig:Estate_commonP}, and the state trajectory $x_i(t)$ is shown in Figure~\ref{Fig:state_commonP}-(a). As can be seen,  both  the encoded state corresponding to each channel and the state $x_i(t)$  reach consensus eventually for all the nodes.

Furthermore, we compare the convergence speed of the proposed privacy-preserving algorithm with that of the conventional consensus algorithm\footnote{
The conventional consensus algorithm under the undirected communication topology  is given by 
\begin{equation}\label{eq:converntional_consensus_alg}
x_i(t+1)= x_i(t) + \overline{\gamma}\sum_{j\in\mathcal V} \left( x_j(t)-x_i(t)\right),
\end{equation}
where the step size  $\overline{\gamma}\in(0,\frac{1}{d_{max}})$. More details can be found in \cite{olfati2007consensus}.  
}
which is shown in Figure~\ref{Fig:state_commonP}-(b). 
We see that the privacy-preserving method  compensates convergence speed for the secure of individual privacy, which is actually a natural result of the fact that the update law \eqref{eq:update_law_total} proceeds a consensus operation between pairs of neighbors only on one channel in each iteration.

\begin{figure}[t]
  \centering
  \includegraphics[width=0.5\textwidth, trim={0.2cm 0cm 0 0cm},clip]{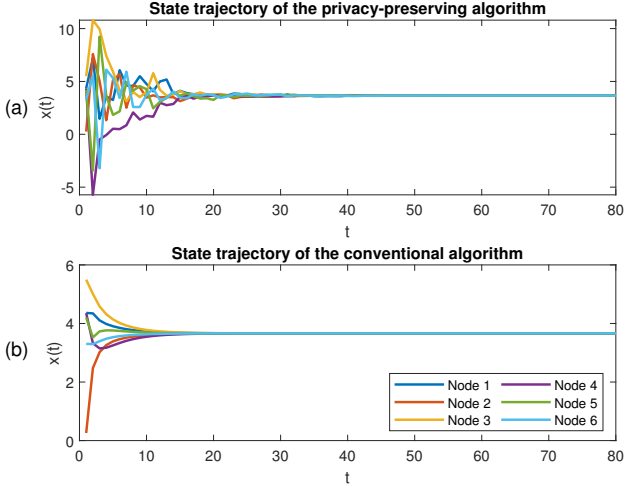}
  \caption{The state trajectory of the privacy-preserving and conventional consensus methods: (a) is the state trajectory given by the privacy-preserving method. (b) shows the sate trajectory of the conventional consensus algorithm \eqref{eq:converntional_consensus_alg} with $\overline{\gamma} = 1/(d_{max}+1)$.
  }\label{Fig:state_commonP}
\end{figure}

\subsection{Example with various privacy degrees}
We then consider the case of agents owing different privacy degree. The system consists of also $N=6$ nodes with a same network topology as that given in Figure~\ref{Fig:graph}. We set the privacy degree $p_i=|\mathcal{N}_i| - 1$, $i\in \mathcal V$, using which the privacy of a node $i$ is  disclosed only if all its neighbors collude, or all neighbors are attacked by an adversary.

The same number of channels $M$, step size $\gamma$, and communication keys $S$ are selected as those in Section~\ref{subsec:identical_p}.
Then the update law \eqref{eq:update_law_total} with the handshake procedure given in Algorithm~\ref{alg:handshake1} are simulated. 
The resulted trajectory of the encoded state on each channel $k\in[M]$ and trajectory of the state $x_i(t)$ are depicted in 
Figure~\ref{Fig:Estate_diffP} and Figure~\ref{Fig:state_diffP}, respectively.
We can see that the consensus is eventually reached for all the nodes.

\begin{figure}[thp]
  \centering
  \includegraphics[width=0.5\textwidth, trim={0.8cm 1cm 0 0cm},clip]{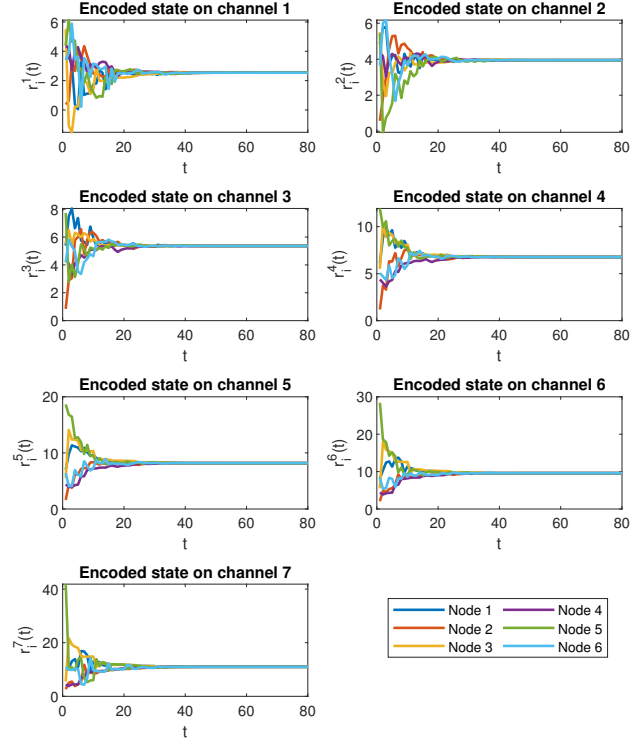}
  \caption{The trajectory of the encoded state on all channels in the case with different individual privacy degrees.}\label{Fig:Estate_diffP}
\end{figure}

\begin{figure}[t]
  \centering
  \includegraphics[width=0.45\textwidth, trim={0.2cm 0cm 0 0cm},clip]{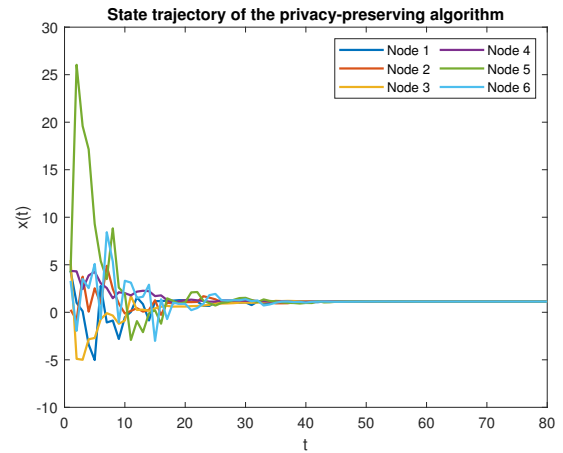}
  \caption{The state trajectory of the privacy-preserving consensus method in the case with different individual privacy degrees.
  }\label{Fig:state_diffP}
\end{figure}

\section{Conclusion and future work}\label{sec:conclusion}
In this work, we propose a privacy-preserving algorithm to solve the consensus problem based on the secret sharing schemes, in which a network of agents reach an agreement on their states without exposing their individual state until an agreement is reached.
Moreover, the proposed algorithm has an exact privacy degree. As a result, the system not only preserves the individual privacy, but also remains the reconstructability of individual states if more than a threshold amount of communication information is given. 

In  future work, such an idea can be extended to more problems involving network dynamics, such as formation control and distributed optimization. 
It would be also interesting to see following study to exploit the proposed algorithm in the relevant applications.
One application is privacy-preserving voting in networks, where a consensus problem is raised when each agents have an initial state of either $0$ or $1$ representing its preference for two candidates, and the majority can be found by averaging all the initial states. The other applications involve opinion agreement, sensor network averaging, survey mechanism, and distributed decision making.


\begin{appendix}
\subsection{Proof of Lemma~\ref{lem:invariance_sto_system}}\label{sec:appendix_proof}
\begin{proof}
First we denote $\mathcal{M}_t=\{x\in \mathcal S: V_t(x)=0\}$, then $\mathcal{M}=\liminf_{t\to \infty} \mathcal{M}_t$. Moreover, due to the radial undboundedness, function $V_t(x)$ is proper, i.e., for any compact set $A\subset\mathbb R$, we have $V_t^{-1}(A):=\{x\in \mathcal S: V_t(x) \in A\}$ is compact.  

According to Prop. 3.1 in \cite{zhang2016lasalle}, the solution $\{x(t)\}_{t\geq 0}$ satisfies
\[
\lim_{t\to \infty} V_t(x(t)) =0, \quad a.s.
\]
Then take the realization $(w^*_t)_{t\geq 0}$ such that the corresponding solution satisfies $\lim_{t\to \infty} V_t(x^*(t)) =0$, where 
\begin{align*}
x^*(t+1)&=F(x^*(t),w^*_t),\\
x^*(0)&=x(0).
\end{align*}
Note that the probability of $(w^*_t)_{t\geq 0}$ satisfying the above condition is exactly $1$. 

Due to the fact that $V_t(x)$ is proper, we have solution $\{x^*(t)\}$ is bounded. Then there is a positive limit set $L^+$ satisfying that for any $p\in L^+$, there is a subsequence $\{t_\ell\}_\ell \subset \mathbb N$ with $t_\ell \to \infty$ and $x^*(t_\ell)\to p$ as $\ell\to \infty$.
Further, as we know  $\lim_{t\to \infty} V_t(x^*(t)) =0$, it holds that  
$\lim_{t\to \infty} V_t(p) =0$. This implies that 
\[
L^+ \subset \liminf_{t\to \infty}\mathcal{M}_t,
\]
which proves the assertion.
\end{proof}

\subsection{Intuition to determine $\overline{r}_\infty$ in Section~\ref{sec:converge}.B}\label{sec:appendix_intuition}
Let $r_\infty \in \mathbb{R}^{MN}$ be the  point to which the dynamics \eqref{eq:total_dynamics_distinctp} almost surely converges. 
We first notice that due to property i) in Lemma~\ref{lem:property_GTG_Tp}, there are invariant variables in  dynamics \eqref{eq:total_dynamics_distinctp} being
\begin{align*}
(\widetilde{\Phi} \otimes \mathbf{1}_N)^T r(t+1) &= (\widetilde{\Phi} \otimes \mathbf{1}_N)^T (I_{MN}- \gamma \overline{L}(t) )r(t)\\
 &=(\widetilde{\Phi} \otimes \mathbf{1}_N)^T r(t) \\
 &=\dots = (\widetilde{\Phi} \otimes \mathbf{1}_N)^T r(0),
\end{align*} 
for any $t\in \mathbb N$, where $\widetilde{\Phi}=(v_1, v_2, \dots, v_{\overline{p}+1})\in \mathbb{R}^{M \times (\overline{p}+1)}$, and $v_i$ is defined in \eqref{eq:definition_v_i}. This further implies that 
\begin{equation}\label{eq:r_inf_relation}
(\widetilde{\Phi} \otimes \mathbf{1}_N)^T r_\infty = (\widetilde{\Phi} \otimes \mathbf{1}_N)^T r(0). 
\end{equation}
Next, we assume the following Assumption holds.
\begin{assumption}\label{ass:converge_to_consensus}
$r_\infty$ is a consensus point, i.e., $r_\infty=\overline{r}_\infty \otimes \mathbf{1}_N$, for some vector $\overline{r}_\infty \in \mathbb{R}^M$. 
\end{assumption} 

Moreover, as point $r_\infty$ is invariant under dynamics \eqref{eq:total_dynamics_distinctp}, i.e.,  $G^T \Pi\, G\, (I_{MN}- \gamma \overline{L}(t) )\, r_\infty =r_\infty$,  we have under Assumption~\ref{ass:converge_to_consensus},
\begin{align*}
G^T \Pi\, G\, \left(\overline{r}_\infty \otimes \mathbf{1}_N\right)  =\overline{r}_\infty \otimes \mathbf{1}_N.
\end{align*}
This further leads to that vector $\overline{r}_\infty$ satisfies 
\begin{equation}\label{eq:hat_r_inf_relation}
\overline{T}_i \overline{r}_\infty = \overline{r}_\infty, \quad \forall i\in \mathcal{V}.
\end{equation}
As $\overline{T}_i$ is the projection matrix on subspace $\Span  \Phi_i$, it holds that
$\overline{r}_\infty \in \Span \widetilde{\Phi}$.

Then, let $\overline{r}_\infty = \widetilde{\Phi} \vartheta$, for some $\vartheta \in \mathbb{R}^{\overline{p}+1}$. By using equation \eqref{eq:r_inf_relation}, we have
\[
(\widetilde{\Phi} \otimes \mathbf{1}_N)^T (\widetilde{\Phi} \otimes \mathbf{1}_N) \vartheta = (\widetilde{\Phi} \otimes \mathbf{1}_N)^T r(0). 
\]
Therefore, we can solve that $\vartheta= \frac{1}{N} (\widetilde{\Phi}^T \widetilde{\Phi}   )^{-1} (\widetilde{\Phi} \otimes \mathbf{1}_N)^T r(0)$. Finally, we have
\[
\overline{r}_\infty = \widetilde{\Phi} \vartheta=   \frac{1}{N}  (\overline{T}_a\otimes \mathbf{1}_N)^T r(0).
\]

In the end, as the convergence analysis in Section~\ref{sec:converge}.B is independent of any assumption, we conclude that Assumption~\ref{ass:converge_to_consensus} indeed holds.
\end{appendix}

\balance
\bibliographystyle{ieeetr}
\bibliography{ref}

\begin{IEEEbiography}
    {Silun Zhang} (S'16–M’20) received his B.Eng. and M.Sc. degrees in Automation from Harbin Institute of Technology, China, in 2011 and 2013, respectively, and the PhD degree in Optimization and Systems Theory from Department of Mathematics, KTH Royal Institute of Technology, Sweden, in 2019. 

Dr. Zhang is currently a Wallenberg postdoctoral fellow with the Laboratory for Information and Decision Systems (LIDS), MIT, USA. His main research interests include nonlinear control, networked systems, rigid-body attitude control, security and privacy in multi-party computation, and modeling large-scale systems.

\end{IEEEbiography}
\vspace{5mm}
\begin{IEEEbiography}
    {Thomas Ohlson Timoudas} received his PhD degree in Mathematics in October 2018 from KTH Royal Institute of Technology, Sweden, and his MSc (2013) and BSc (2012) degrees in Mathematics from Stockholm university, Sweden.

He is currently a postdoctoral researcher with the Department of Network and Systems Engineering at KTH Royal Institute of Technology, Sweden. His main research interests include dynamical systems, networked systems, internet of things, and distributed algorithms.
\end{IEEEbiography}
\begin{IEEEbiography}
{Munther A. Dahleh}(S’84–M’87–SM’97–F’01)
received his Ph.D. degree from Rice University, Houston, TX, in 1987 in Electrical and Computer Engineering. Since then, he has been with the Department of Electrical Engineering and Computer Science (EECS), MIT, Cambridge, MA, where he is now the William A. Coolidge Professor of EECS. He is also a faculty affiliate of the Sloan School of Management. He is the founding director of the newly formed MIT Institute for Data, Systems, and Society (IDSS). Previously, he held the positions of Associate Department Head of EECS, Acting Director of the Engineering Systems Division, and Acting Director of the Laboratory for Information and Decision Systems. He was a visiting Professor at the Department of Electrical Engineering, California Institute of Technology, Pasadena, CA, for the Spring of 1993. He has consulted for various national research laboratories and companies.

Dr. Dahleh is interested in Networked Systems with applications to Social and Economic Networks, financial networks, Transportation Networks, Neural Networks, and the Power Grid. Specifically, he focuses on the development of foundational theory necessary to understand, monitor, and control systemic risk in interconnected systems.  He is four-time recipient of the George Axelby outstanding paper award for best paper in IEEE Transactions on Automatic Control. He is also the recipient of the Donald P. Eckman award from the American Control Council in 1993 for the best control engineer under 35. He is a fellow of IEEE and IFAC. 
\end{IEEEbiography}

\end{document}